\definecolor{shadecolor}{HTML}{cbe1ee}
\newtheorem{Theorem}{Theorem}[section]
\newtheorem{Proposition}[Theorem]{Proposition}
\newtheorem{Definition}[Theorem]{Definition}
\theoremstyle{remark}
\numberwithin{equation}{section}
\numberwithin{figure}{section}
\numberwithin{table}{section}
\newcommand{\Esp}[1]{\mathrm{E}\! \left[ #1 \right]}
\newcommand{\Var}[1]{\mathrm{Var}\! \left[ #1 \right]}
\newcommand{\Cov}[1]{\mathrm{Cov}\! \left[ #1 \right]}
\newcommand{\Prr}[1]{\mathrm{Pr}\! \left( #1 \right)}
\begin{document}

\title{A Claim Score for Dynamic Claim Counts Modeling}
\author{\textsc{Jean-Philippe Boucher$^{\ddag}$ and Mathieu Pigeon$^{\star}$}}

\maketitle
\vspace{-1cm}

\begin{center}
Quantact / Département de mathématiques, UQAM.
Montréal, Québec, Canada. \\
 \textit{Emails: }\texttt{\textit{\small $^{\ddag}$boucher.jean-philippe@uqam.ca}}{\small {}}\\ \texttt{\textit{\small $^{\star}$pigeon.mathieu.2@uqam.ca}}{\small {}} \\
\par
\end{center}

\begin{abstract}
We develop a claim score based on the Bonus-Malus approach proposed by \cite{Boucher2014}.  We compare the fit and predictive ability of this new model with various models for of panel count data. In particular, we study in more details a new dynamic model based on the Harvey-Fernandès (HF) approach, which gives different weight to the claims according to their date of occurrence. We show that the HF model has serious shortcomings that limit its use in practice. In contrast, the Bonus-Malus model does not have these defects. Instead, it has several interesting properties: interpretability, computational  advantages and ease of use in practice. We believe that the flexibility of this new model means that it could be used in many other actuarial contexts. Based on a real database, we show that the proposed model generates the best fit and one of the best predictive capabilities among the other models tested.

\end{abstract}

\section{Introduction and Motivation}
 \label{sec:introduction}

In a recent paper, \cite{Boucher2014} proposed a new approach to merit pricing for the number of claims in automobile insurance. Instead of assuming a random effect or a copula to model the dependence between all the contracts of a policyholder, the authors directly included a bonus-malus system (BMS) in the modeling. Consequently, all the past claims experience is summarized into a single numerical value: the current level of the BMS or what we call a \textit{claim score} (similar to a credit score). Besides strengthening links between the theory and what is currently applied in practice for pricing, the proposed approach showed an interesting flexibility that can be used, for example, to include some legal constraints in the ratemaking. However, despite the advantages of the approach, the parameter estimation procedure proposed by \cite{Boucher2014} is quite cumbersome and time-consuming. 
 
In this paper, we propose a straightforward but significant modification to this Bonus-Malus Scales Model for Panel Count Data (BMS-panel model). It changes the way the parameters are estimated, drastically reducing both the complexity and the time of the calibration procedure. Taking advantage of this new flexibility of the model, we perform an exhaustive analysis of the model, where we do not set restriction on the size of the parameters space. In addition to showing some of the properties of the model, we compare the BMS-panel model with various panel data count models, including a new model that we introduce. We show that the fit of the BMS-panel model is the best among the considered distributions, and we analyze its prediction power.

Because the BMS-panel model has a Markovian property that greatly facilitates its use, we believe that the bonus-malus approach could be an interesting option for modeling the complex structure of claim experience in actuarial sciences. For example, in situations where we want to model dependence between cars from the same contract, or when we want to link the frequency and the severity of claims, the \textit{claims score} produced by the BMS model could be considered.  This new paradigm could then replace complex approaches such as using a series of correlated random effects (see \cite{AA2016}) or a multiple hierarchical copula (see \cite{Shi2016}).

The paper is constructed according to the following structure. In Section~\ref{sec:ratemaking}, we briefly present the main ratemaking approaches, where we introduce a new dynamic count 
distribution for panel data.  In Section~\ref{ssec:BMS}, we review the BMS-panel model, we introduce a simplified version and we highlight some properties of the model.
In Section~\ref{sec:illustrations}, based on a database from a property and casualty insurance company, we calibrate and compare the proposed models.  A very wide range of structural parameters of the 
BMS-panel model is tested. Finally, we conclude and present some promising generalizations in Section~\ref{sec:conclusion}.
 
\section{Ratemaking Techniques}
\label{sec:ratemaking}
 
 We consider a property and casualty portfolio of $M$~policyholders observed over several years. For each contract~$i$, $i = 1, \ldots, M$, we define $N_{i,t}$, a discrete random variable counting the number of claims for the policy period $t$ and $\boldsymbol{X}_{i,t}$ a column-vector containing available explanatory factors at the beginning of period $t$. In this vector, we may include $d_{i,t}$, a scalar measuring the risk exposure. Through this project, we assume that the primary purpose of a ratemaking (or pricing) model is to provide a prediction for
 \begin{align*}
   &\Esp{N_{i,T_i+1}|\underbrace{N_{i,1}, \ldots, N_{i,T_i}}_{\boldsymbol{N}_{i,T_i}}, \boldsymbol{X}_{i,1}, \ldots, \boldsymbol{X}_{i,T_i + 1}},
 \end{align*}
where all policyholders are independent and $T_i$ is the latest observed period for policyholder~$i$. Our main objective is to investigate the strengths and weaknesses of the following classes of models:
 \begin{itemize}
 \item cross-section data models, for which independence is assumed between annual contracts for a policyholder (see subsection~\ref{ssec:cross});
 \item panel data models, for which we suppose dependence between all contracts written with a policyholder (see subsection~\ref{ssec:panel}); and 
 \item BMS-panel models, for which at least some of the past information is summarized using a bonus-malus system (BMS) (see section~\ref{ssec:BMS}).
 \end{itemize}
 
\subsection{Cross-section data models}
\label{ssec:cross}

For cross-sectional models, we have independence between all policyholders as well as between all contracts so we can write
\begin{align*}
 \Prr{N_{i,t+1} = n| \boldsymbol{N}_{i,t}, \boldsymbol{X}_{i,1}, \ldots, \boldsymbol{X}_{i,t+1}} &= \Prr{N_{i,t+1} = n| \boldsymbol{X}_{i,t+1}}
\end{align*}
and we can simplify our prediction problem in the following way:
\begin{align*}
  \Esp{N_{i,t+1}|\boldsymbol{N}_{i,t}, \boldsymbol{X}_{i,1}, \ldots, \boldsymbol{X}_{i,t + 1}} &= \Esp{N_{i, t + 1}| \boldsymbol{X}_{i,t + 1}} = \lambda\left(\boldsymbol{X}_{i,t+1}\right),
\end{align*}
where $\lambda()$ is a function. Traditionally, in risk classification, we assume a log linear relationship between the mean parameter and the policyholder's and/or claim's characteristics such as sex, age, marital status, etc. (see \cite{Denuit}).

The base model is usually the Poisson distribution, which is part of the exponential linear family and has useful and well-known statistical properties (see \cite{McCullaghNelder1989} or \cite{frees2014}). The probability mass function is $\Prr{N_{i,t} = n| \boldsymbol{X}_{i,t}} = \left(\lambda_{i,t}\right)^n\exp\left(-\lambda_{i,t}\right)/n!$, $n = 0, 1, 2, \ldots$ and $0$ elsewhere, where $\lambda_{i,t} = d_{i,t} \exp\left(\boldsymbol{X}_{i,t}'\boldsymbol{\beta}\right)$ and $\boldsymbol{\beta}$ is a column vector containing model parameters. Finally, we note
\begin{align*}
  \pi_{i,t+1}^{\text{Poi}} &= \Esp{N_{i, t+1}| \boldsymbol{X}_{i,t+1}} =  \lambda_{i,t+1}.
\end{align*}
This equation refers to an annual premium, when the cost of each claim is $1$. Because this premium does not depend on the past claim experience, we usually call $\pi_{i,t+1}$ an \emph{a priori premium}.

This Poisson distribution implies equidispersion, i.e., $\Esp{N_{i, t}| \boldsymbol{X}_{i,t}} = \Var{N_{i, t}| \boldsymbol{X}_{i,t}}$ which is, usually, a too strong assumption in non-life ratemaking. In an effort to overcome this issue, we consider the Negative Binomial (NB) distribution, which is one of the most commonly used alternatives to the Poisson model. To facilitate data analysis, we limit ourselves to the simplest forms of the NB distribution but someone can consult \cite{Wink}.  The most intuitive way to construct an NB distribution from a Poisson distribution is to introduce a random heterogeneity term in the mean parameter. Let $\Theta$ be a random variable following a Gamma$(\alpha = 1/\tau, \gamma = 1/\tau)$ distribution ($\Esp{\Theta} = 1$) with probability density function $f_\Theta(\theta) = \tau^\tau \theta^{\tau - 1}\exp\left(-\theta\tau\right)/\Gamma(\tau)$, $\theta > 0$ and $0$ elsewhere and assume that $\left(N_{i,t}|\Theta, \boldsymbol{X}_{i,t}\right) \sim \text{Poisson}\left(\lambda_{i,t}\theta\right)$. Therefore, the random variable $\left(N_{i,t}|\boldsymbol{X}_{i,t}\right)$ follows a Negative Binomial distribution of type $2$ (NB2) with parameters $\tau$ and $\lambda_{i,t}$ and probability mass function given by
\begin{align*}
  \Prr{N_{i,t} = n| \boldsymbol{X}_{i,t}} &= \frac{\Gamma(n + 1/\tau)}{\Gamma(n+1)\Gamma(1/\tau)}\left(\frac{\lambda_{i,t}}{1/\tau + \lambda_{i,t}}\right)^{n}\left(\frac{1/\tau}{1/\tau + \lambda_{i,t}}\right)^{1/\tau}, \qquad n = 0, 1, 2, \ldots
\end{align*}
and $0$ elsewhere. We can directly obtain $\Esp{N_{i,t}} = \lambda_{i,t}$ and $\Var{N_{i,t}} = \lambda_{i,t} + \tau^2 \lambda_{i,t}$.

We also consider a slightly different version of the Negative Binomial distribution (NB1) with parameters $\lambda_{i,t}$ and $\tau$ for which the probability mass function is
\begin{align*}
    \Prr{N_{i,t} = n| \boldsymbol{X}_{i,t}} &= \frac{\Gamma\left(n + \frac{\lambda_{i,t}}{\tau}\right)}{\Gamma(n+1)\Gamma\left(\frac{\lambda_{i,t}}{\tau}\right)}\left(1 + \tau\right)^{-\lambda_{i,t}/\tau}\left(1 + \frac{1}{\tau}\right)^{-n}, \qquad n = 0, 1, 2, \ldots
\end{align*}
and $0$ elsewhere. We observe that $\Var{N_{i,t}} = \lambda_{i,t} + \tau\lambda_{i,t} = \phi\lambda_{i,t}$, which corresponds to the variance function of the overdispersed Poisson in the generalized linear model (GLM) framework. In all cases, parameters can be easily estimated using a maximum-likelihood procedure.

The premium predicted by both the NB1 and NB2 model is
\begin{align*}
  \pi_{i,t+1}^{\text{NB}} &= \Esp{N_{i, t+1}| \boldsymbol{X}_{i,t+1}} = \lambda_{i,t+1},
\end{align*}
where independence between contracts of the same insured is still assumed.

 \subsection{Panel data models}
 \label{ssec:panel}

 Panel data models assume some dependence between all annual contracts belonging to a single policyholder. We need a model for the (conditional) random vector $\begin{bmatrix}\boldsymbol{N}_{i,t} | \boldsymbol{X}_{i,1}, \ldots, \boldsymbol{X}_{i,t}\end{bmatrix}$, $t = 1, 2, \ldots$ in order to predict $\Esp{N_{i,T_i+1}|\boldsymbol{N}_{i,T_i}, \boldsymbol{X}_{i,1}, \ldots, \boldsymbol{X}_{i,T_i + 1}}$. There are plenty of models for the time dependence between random variables, e.g., conditional models, marginal models, and subject-specific models, but it has been shown that random effects models were the best suited for non-life insurance data (see \cite{Boucher2009}). In a ratemaking model, an individual random effect may capture variability caused by the lack of information on some important classification variables such as road rage and drug use. Let $\Theta$ denote this random effect. Conditionally on $\Theta$, all contracts of the same insured are supposed independent. The joint probability mass function is defined by
 \begin{align*}
   &\Prr{N_{i,1} = n_{i, 1}, \ldots, N_{i,t} =  n_{i,t} | \boldsymbol{X}_{i,1}, \ldots, \boldsymbol{X}_{i,t}}\\
   &= \int_{-\infty}^\infty \Prr{N_{i,1} = n_{i,1}, \ldots, N_{i,t} =  n_{i,t}|\theta_i, \boldsymbol{X}_{i,1}, \ldots, \boldsymbol{X}_{i,t}}\,dG(\theta_i|\boldsymbol{X}_{i,1}, \ldots, \boldsymbol{X}_{i,t}), \\
&=\int_{-\infty}^\infty \left(\prod_{k = 1}^t \Prr{N_{i,k} = n_{i,k}|\theta_i, \boldsymbol{X}_{i,1}, \ldots, \boldsymbol{X}_{i,t}}\right)\,dG(\theta_i).
 \end{align*}
where $G(\theta_i)$ is the cumulative distribution function of the random effect.  We assume that the distribution of the random effect $\theta_i$ does not depend on covariates $\boldsymbol{X}$; see \cite{Boucher2006} for a discussion about this conventional assumption in actuarial science. Finally, note that the joint distribution can also be expressed as the product of all predictive distributions of each insurance contract for insured $i$:
    \begin{align*}
      &\Prr{N_{i,1} = n_{i,1}, \ldots, N_{i,t} =  n_{i,t}|\boldsymbol{X}_{i,1}, \ldots, \boldsymbol{X}_{i,t}}\\
      &=\Prr{N_{i,1} = n_{i,1}|\boldsymbol{X}_{i,1}}  \Prr{N_{i,2} = n_{i,2}|N_{i,2} = n_{i,1}, \boldsymbol{X}_{i,1}, \boldsymbol{X}_{i,2}}  \\
      &\phantom{=}  \times \cdots \times \Prr{N_{i,t} = n_{i,t}| N_{i,1} = n_{i,1}, \ldots, N_{i,t-1} =  n_{i,t-1},\boldsymbol{X}_{i,1}, \ldots, \boldsymbol{X}_{i,t}}.
    \end{align*}
 
\subsubsection{Negative Multinomial}  
 
Following the construction of the NB2, the simplest random effects model is given by
 \begin{align*}
   \left(N_{i,t}|\Theta_i = \theta_i, \boldsymbol{X}_{i,1}, \ldots, \boldsymbol{X}_{i,t}\right) &\sim \text{Poisson}\left(\theta_i\lambda_{i,t}\right)
 \end{align*}
 and $\Theta_i \sim \text{Gamma}(\alpha = \kappa, \gamma = \kappa)$ which lead to

\begin{align*}
  \Prr{N_{i,1} = n| \boldsymbol{X}_{i,1}} &= \frac{\Gamma(n + \kappa)}{\Gamma(n+1)\Gamma(\kappa)}\left(\frac{\lambda_{i,1}}{\kappa + \lambda_{i,1}}\right)^{n}\left(\frac{\kappa}{\kappa + \lambda_{i,1}}\right)^\kappa \\
  \Prr{N_{i,t+1} = n| \boldsymbol{N}_{i,t}, \boldsymbol{X}_{i,1}, \ldots, \boldsymbol{X}_{i,t+1}} &= \frac{\Gamma(n + \alpha)}{\Gamma(n+1)\Gamma(\alpha)}\left(\frac{\lambda_{i,t+1}}{\gamma + \lambda_{i,t+1}}\right)^{n}\left(\frac{\gamma}{\gamma + \lambda_{i,t+1}}\right)^\alpha, 
\end{align*}
 with $\alpha = \kappa + \sum_{k = 1}^{t} n_{i, k}$, $\gamma = \kappa + \sum_{k = 1}^{t}\lambda_{i, k}$, and 
 $\lambda_{i,k} = d_{i,k} \exp\left(\boldsymbol{X}_{i,k}'\boldsymbol{\beta}\right)$.  Both probability distributions have a NB2 form (for all $t$), and then can be simply noted as $\text{NB2}_{i,t}(\lambda_{i,t}, \alpha, \gamma)$. 

The predictive distribution comes from the classical credibility theory (see \cite{BG}), where the \emph{a posteriori} distribution of the heterogeneity term $\Theta_i$ is a Gamma distribution with updated parameters $\kappa + \sum_{k = 1}^t n_{i, k}$ and $\kappa + \sum_{k = 1}^t \lambda_{i,k}$. The joint distribution, called Multivariate Negative Binomial distribution (MVNB), or Negative Multinomial, is often applied in non-life insurance. Again, parameters can be estimated using a maximum likelihood method. Therefore, the requested prediction is
    \begin{align*}
         \pi_{i,1}^{\text{MVNB}} &= \Esp{N_{i,1}|\boldsymbol{X}_{i,1}} = \lambda_{i, 1}\left(\frac{\kappa}{\kappa}\right) = \lambda_{i, 1} \\
      \pi_{i, t+1}^{\text{MVNB}} &= \Esp{N_{i,t+1}|\boldsymbol{N}_{i,t}, \boldsymbol{X}_{i,1}, \ldots, \boldsymbol{X}_{i,t + 1}} = \lambda_{i, t + 1}\left(\frac{\kappa + \sum_{k = 1}^{t} n_{i, k}}{\kappa + \sum_{k = 1}^{t}\lambda_{i, k}}\right).
    \end{align*}

Given that each insurance contract of the same policyholder has the same random effects, $N_{i,j}$ and $N_{i,t+j}$ are dependent:
\begin{align*}
\Cov{N_{i,t}, N_{i,t+j}|\boldsymbol{X}_{i,1}, \ldots, \boldsymbol{X}_{i,t+j}} &= \lambda_{i, t} \lambda_{i, t+j} (1/\kappa), \qquad j > 0.
 \end{align*}

\subsubsection{Negative Binomial with random effects}  

As pointed out by \cite{Boucher2008}, the Negative Binomial with random effects Beta is well suited to model the number of claims, resulting in what we call a NBBeta distribution. From the probability mass function of the NB2 distribution, we assume that $(1/\tau)/(\lambda + 1/\tau) \sim \text{Beta}(a, b)$. Then, we have 
\begin{align*}
  \Prr{N_{i,1} = n| \boldsymbol{X}_{i,1}} &= \frac{\Gamma(a+b) \Gamma(a+\lambda_{i,1})\Gamma(b+ n)}
{\Gamma(a) \Gamma(b) \Gamma(a+b+ \lambda_{i,1} + n)}
\frac{\Gamma(\lambda_{i,1} + n)}{\Gamma(\lambda_{i,1}) 
\Gamma(n + 1)} \\
  \Prr{N_{i,t+1} = n| \boldsymbol{N}_{i,t}, \boldsymbol{X}_{i,1}, \ldots, \boldsymbol{X}_{i,t+1}} &= \frac{\Gamma(\alpha+\gamma) \Gamma(\alpha+\lambda_{i,t+1})\Gamma(\gamma+ n)}
{\Gamma(\alpha) \Gamma(\gamma) \Gamma(\alpha+\gamma+ \lambda_{i,t+1} + n)}
\frac{\Gamma(\lambda_{i,t+1} + n)}{\Gamma(\lambda_{i,t+1}) 
\Gamma(n + 1)} \\
\end{align*}
 with $\alpha = a + \sum_{k = 1}^{t} \lambda_{i, k}$, $\gamma = b + \sum_{k = 1}^{t}n_{i, k}$ and $\lambda_{i,k} = d_{i,k} \exp\left(\boldsymbol{X}_{i,k}'\boldsymbol{\beta}\right)$.  Both probability distributions have an NBB form (for all $t$), and then can be simply noted as
$\text{NBB}_{i,t}(\lambda_{i,t}, \alpha, \gamma)$. \\
 
 We can easily show that  $\Esp{N_{i,t}} = \lambda_{i,t}\left(b/(a - 1)\right)$, and
 \begin{align*}
   \Var{N_{i,t}} &= \lambda_{i,t} \frac{(a+b-1)b}{(a-1)(a-2)} + \lambda_{i,t}^2 \left(\frac{(b+1)b}{(a-1)(a-2)} - \frac{b^2}{(a-1)^2} \right)
 \end{align*}
 and derive the \textit{a posteriori} distribution of the random effets, which is a Beta distribution with updated parameters $\sum_t \lambda_{i,t} + a$ and $\sum_t n_{i,t} + b$. Thus, we have
	\begin{align*}
         \pi_{i,1}^{\text{NBB}} &= \Esp{N_{i,1}|\boldsymbol{X}_{i,1}} = \lambda_{i, 1}\left(\frac{b}{a-1}\right) \\
      \pi_{i, t+1}^{\text{NBB}} &= \Esp{N_{i,t+1}|\boldsymbol{N}_{i,t}, \boldsymbol{X}_{i,1}, \ldots, \boldsymbol{X}_{i,t + 1}} = \lambda_{i, t + 1}\left(\frac{b + \sum_{k = 1}^{t} n_{i, k}}{a + \sum_{k = 1}^{t}\lambda_{i, k} - 1}\right).
    \end{align*}

Finally, as for the MVNB, the covariance between the number of claims of annual contracts of the same insured can be shown to be equal to: 
\begin{align*}
\Cov{N_{i,t}, N_{i,t+j}|\boldsymbol{X}_{i,1}, \ldots, \boldsymbol{X}_{i,t+j}} &= \lambda_{i,t} \lambda_{i,t+j} \left(\frac{b}{a-1}\right)
\left(\frac{b+1}{a-2} - \frac{b}{a-1} \right), \qquad j > 0.
\end{align*}
Obviously, for random effects models, other choices than the MVNB of the NBBeta are possible to construct.

\subsection{Dynamic panel data models}

At this point, it is worth mentioning one major drawback of the  use of a classic random effects model, such as the MVNB or the NB-Beta. By analyzing the predictive premiums, one can see that all past claims have the same importance, i.e., the same weight, in predicting the future premium. This means that a 10-year old claim is as important as a 1-year old one. In practice, it is generally accepted that this is not a realistic scenario: drivers evolve over time, and recent experience should have a greater impact than older experience when estimating a driver's risk.

Unfortunately, it is not easy to include such temporal dynamics in a panel data model. A random effects approach must assume a random process for $\Theta_{i,t}$. Thus, models where the random effects $\Theta_{i,t}$, $t = 1, \ldots, T_i$ evolve over time would need a $T$-dimensional integral to express the joint distribution of all claims of a single policyholder. Therefore, complex numerical procedures or approximated inference methods are needed (see \cite{Jung} for example). Other approaches have been proposed to include a dynamic effect into count models: evolutionary credibility models in \cite{Alb}, Poisson residuals in \cite{Pinquet}, or, more recently, copulas with the jittering method in \cite{Shi2014}.

In this paper, we consider the Harvey-Fernandes model, or H-F model, proposed by \cite{H1989} and introduced in actuarial science by \cite{B2007}. Let $\mathcal{H}_{i,t}$ denote the claim history up to time $t$ for contract $i$. This approach includes random effects that develop over time according to a two-step procedure:
\begin{itemize}
\item[\textbf{P-step}:] the conditional distribution of the random effect $\Theta_{i,t}|\mathcal{H}_{i,t}$ is \emph{predicted} and
\item[\textbf{U-step}:] the distribution is \emph{updated} according to some fonction $U$ and $\Theta_{i, t+1} \sim U(\Theta_{i,t}|\mathcal{H}_{i,t})$. 
\end{itemize}
If we select a conjugate distribution for the random effects (such as the gamma for the MVNB model, and the beta for the NBBeta model), $\Theta_{i,t}$ and $\Theta_{i,t}|\mathcal{H}_{i,t}$ should be from the same distribution. Thus, the selected function $U$ can be directly applied to the parameters of the distribution of $\Theta_{i,t}|\mathcal{H}{i,t}$ in order to obtain the distribution of $\Theta_{i,t+1}$ (\textbf{U-step}). Consistent with \cite{B2007}, we follow this path and select a function $U$ where the parameters $\alpha_t$, $\tau_t$ of the posterior distribution $\Theta_{i,t}|N_{i,t}$ will be modified as $\alpha_t^* = \nu \alpha_t$ and $\tau_t^* = \nu \tau_t$ with starting values $\alpha_0$ and $\tau_0$. From this structure we can derive the distribution of each $\Theta_{i,1}, \ldots, \Theta_{i,t}$.

Given that the joint distribution can be expressed as the product of predictive distributions, the following result can be shown (we remove $\boldsymbol{X}_{i}$ for simplicity):
    \begin{align*}
      &\Prr{N_{i,1} = n_{i,1}, \ldots, N_{i,t} =  n_{i,t}} =\\
& \ \Prr{N_{i,1} = n_{i,1}|\alpha_{i, 1}, \gamma_{i, 1}}  \Prr{N_{i,2} = n_{i,2}|\alpha_{i, 2}, \gamma_{i, 2}} \times \ldots \times \Prr{N_{i,t} = n_{i,t}| \alpha_{i, t}, \gamma_{i, t}},
    \end{align*}
    where only updated parameters are explicitly mentioned in each distribution:
\begin{align}\label{eq:rec}
  \alpha_{i, t} = \left(\nu\right)^{t-1}\alpha_{0} + \sum_{k=1}^{t-1}\left(\nu\right)^{k}n_{i, t-k}\nonumber \\
  \gamma_{i, t} = \left(\nu\right)^{t-1}\gamma_{0} + \sum_{k=1}^{t-1}\left(\nu\right)^{k}\lambda_{i, t-k}.
\end{align}
Thus, in line with \cite{B2007}, we can use a dynamic MVNB (noted HF-MVNB) by using the product of $\text{NB2}_{i,t}(\lambda_{i,t}, \alpha_{i,t}, \gamma_{i,t})$ distributions for $t=1,\ldots,T_i$.  Similarly, we can construct a new dynamic distribution based on the NBB distribution and then noted HF-NBB, as the product of  $\text{NBB}_{i,t}(\lambda_{i,t}, \alpha_{i,t}, \gamma_{i,t})$ distributions. We can see that both dynamic joint distribution put larger weight on recent claims in the predictive premium calculation:
     \begin{align*}
      \pi_{i, t+1}^{\text{HF-MVNB}} &= \Esp{N_{i,t+1}|\boldsymbol{N}_{i,t}, \boldsymbol{X}_{i,1}, \ldots, \boldsymbol{X}_{i,t + 1}}\\
       &= \lambda_{i, t + 1}\left(\frac{\left(\nu\right)^{t}\kappa + \sum_{k=1}^{t}\left(\nu\right)^{k}n_{i, t-k+1}}{\left(\nu\right)^{t}\kappa + \sum_{k=1}^{t}\left(\nu\right)^{k}\lambda_{i, t-k+1}}\right)
    \end{align*}
and
	\begin{align*}
      \pi_{i, t+1}^{\text{HF-NBB}} &= \Esp{N_{i,t+1}|\boldsymbol{N}_{i,t}, \boldsymbol{X}_{i,1}, \ldots, \boldsymbol{X}_{i,t + 1}}\\
       &= \lambda_{i, t + 1}\left(\frac{\left(\nu\right)^{t} b + \sum_{k=1}^{t}\left(\nu\right)^{k}n_{i, t-k+1}}{\left(\nu\right)^{t} a + \sum_{k=1}^{t}\left(\nu\right)^{k}\lambda_{i, t-k+1} - 1}\right).
        \end{align*}
        We discuss the covariance between the number of claims of annual contracts of the same policyholder in Section~\ref{sec:illustrations}.

 \section{Bonus-Malus Systems Panel Models}
 \label{ssec:BMS}

Bonus-malus systems (BMS) have been introduced in ratemaking procedures for many practical reasons (see \cite{Lemaire} and \cite{Denuit}). The basic idea underlying BMS models is to summarize past claims experience into a \textit{claim score}: a discrete value going from $1$ to $s$, where $1$ represent the lowest risk and $s$, the highest. In practice, when a new policyholder enters the portfolio, the insurance company gives that individual a selected entry level as an initial claim score.  Each year, depending on the insured's claim experience, the policyholder will move into the bonus-malus scale: toward high values for frequent claims, toward lower values in the opposite case. 

\begin{Definition}[Bonus-Malus System]\label{def:BMS} 
A Bonus-Malus System is a three-parameter model $(\Psi, s, \ell^*)$ where the level of the system at the beginning of period $t + 1$ in the database is given by
   \begin{align}\label{eq:LBMS}
     L(t+1) &= \min\left(\max\left(L(t) - \mathbb{I}\left(N_{i,t} = 0\right) + \Psi N_{i,t}, 1\right), s\right),
   \end{align}
   where $\mathbb{I}\left(N_{i,t} = 0\right)$ is a dummy variable indicating a period without claims, $s$ is the highest level of the system and $\Psi$ is the \emph{jump parameter}. The parameter $\ell^*$ corresponds to the entry level of the system for a policyholder without experience. This kind of structure of the BMS is commonly denoted by $-1/+\Psi$. \\
 \end{Definition}

Figure~\ref{fig:BMS1} presents an example of a bonus-malus system.\\
 
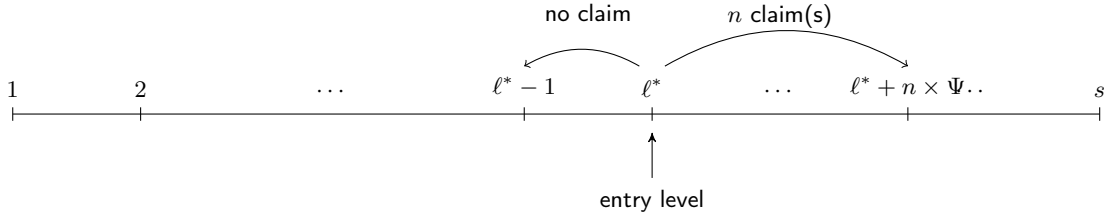
\begin{figure}[!htb]
\begin{center}
\begin{tikzpicture}[scale=0.85, auto, to/.style={->,>=stealth',shorten >=1pt}, every node/.style={font=\fontsize{9pt}{9pt}\selectfont\sffamily, align=center, semithick}]
\draw[-] (0,0) -- (17,0);
\foreach \x in {0, 2, 8, 10, 14, 17}   \draw (\x cm,3pt) -- (\x
cm,-3pt);
\draw (0,0) node[above=3pt] {$1$};
\draw (2,0) node[above=3pt] {$2$};
\draw (8,0) node[above=3pt] {$\ell^*-1$};
\draw (10,0) node[above=3pt] {$\ell^*$};
\draw (14,0) node[above=3pt] {$\ell^* + n \times \Psi$};
\draw (17,0) node[above=3pt] {$s$};
\draw (5,0) node[above=3pt] {$\cdots$};
\draw (12,0) node[above=3pt] {$\cdots$};
\draw (15,0) node[above=3pt] {$\cdots$};
\draw (10,-1) node[below=1pt] {entry level};
\draw (9,2) node[below=3pt] {no claim};
\draw (12,2) node[below=3pt] {$n$ claim(s)};

\draw[to] (10,-1) -- (10,-0.25);

\draw [->] (9.8,0.75) to [out=150,in=30] (8,0.75);
\draw [->] (10.2,0.75) to [out=30,in=150] (14,0.75);
\end{tikzpicture}
\caption{\textit{Example of a bonus-malus system. At the beginning of the first period, the new policyholder $i$ enters the system at level $\ell^*$. If no accidents are filed during the first insurance contract, the policyholder moves from level $\ell^*$ to level $\ell^*-1$ at the beginning of the second period.  For each claim, the policyholder's BMS level increases by $\Psi$.}}
\label{fig:BMS1}
\end{center}
\end{figure}

Intuitively, by averaging the claims frequency for each claim score, the insurance company could then construct a form of merit rating, representing relativities for BMS (see, for example, Figure 5 of \cite{Boucher2014}). However, we can improve the rating system by choosing other approaches than averaging the frequency of each claim score. An extensive body of literature deals with how to calibrate a BMS with cross-section insurance data (see \cite{Denuit} for an overview).  However, when we observe many annual contracts for a single insured in a data set (i.e. panel data structure), \cite{Boucher2014} presents a more suitable approach, where the authors construct what they call a BMS-panel model.

\subsection{BMS-panel model}

The main objective of the BMS-panel model is to estimate the parameters needed for the \textit{a priori} and the \textit{a posteriori} ratemaking simultaneously, by including the BMS structure directly in the modeling.  In order to estimate parameters, we need a model for the conditional random vector $\begin{bmatrix}\boldsymbol{N}_{i,t} | \boldsymbol{X}_{i,1}, \ldots, \boldsymbol{X}_{i,t}\end{bmatrix}$. First, let $\begin{bmatrix}N_{i,1}, N_{i,2}\end{bmatrix}$ be a random vector in a portfolio with a bonus-malus system as defined in Definition~\ref{def:BMS} and suppose that we know $\ell_{i,1}$, the BMS level of insured $i$ at time $1$.  The conditional joint probability mass function is given by (we drop the dependence on $\boldsymbol{X}$ to simplify the presentation)
 \begin{align*}
&\Pr(N_{i,1} = n_{i,1}, N_{i,2} = n_{i,2}|L(1) = \ell_{i,1}) \\
&= \Pr(N_{i,1} = n_{i,1}|L(1) = \ell_{i,1}) \Pr(N_{i,2} = n_{}i,2| N_{i,1} = n_{i,1}, L(1) = \ell_{i,1}) \\
   &= \Pr(N_{i,1} = n_{i,1}|L(1) = \ell_{i,1})\\
   &\phantom{=}\times\left( \sum_{y=1}^s \Pr(N_{i,2} = n_{i,2}| N_{i,1} = n_{i,1}, L(1) = \ell_{i,1}, L(2) = y) \Pr(L(2) = y| N_{i,1} = n_{i,1}, L(1) = \ell_{i,1})  \right). \intertext{Given the bonus-malus system, past information is captured by the last level reached by the system, we obtain}
&= \Pr(N_{i,1} = n_{i,1}|L(1) = \ell_{i,1}) \left( \sum_{y=1}^s \Pr(N_{i,2} = n_{i,2}| L(2) = y) \Pr(L(2) = y| N_{i,1} = n_{i,1}, L(1) = \ell_{i,1}) \right) \\
&= \Pr(N_{i,1} = n_{i,1}|L(1) = \ell_{i,1}) \Pr(N_{i,2} = n_{i,2}| L(2) = \ell_{i,2}), 
 \end{align*}
 where $\Pr(L(2) = y| N_{i,1} = n_{i,1}, L(1) = \ell_{i,1}) = 0$ for all $y$, except for $y= \ell_{i,2}$, the level reached by the bonus-malus system at time $2$ after $n_{i, 1}$ claims were observed during the year. Consequently,
\begin{align}\label{eq:ML}
\Pr(N_{i,1} = n_{i,1}, N_{i,2} = n_{i,2}, \ldots, N_{i,t} = n_{i,t}|L(1) = \ell_{i,1}) &= \prod_{k=1}^{t} \Pr(N_{i,k} = n_{i,k}|L(k) = \ell_{i,k}).
\end{align} 

Several options are available to model the conditional distribution $\Prr{N_{i,t} = n_{i,t}|L(t) = \ell_{i,t}}$: we can consider the distributions introduced in subsection~\ref{ssec:cross} or \ref{ssec:panel}, as well as the hurdle or the zero-inflated distributions (see \cite{Boucher2009} for an extensive overview of count distributions in ratemaking). 
For a selected distribution, the BMS level should be used to model the mean parameter of the conditional distribution.  Based on the Poisson, the NB2 or the NB1 introduced earlier, we assume that the mean parameter will be modeled as
\begin{align*}
   \pi_{i,t+1}^{\text{BMS}} &=  \Esp{N_{i,t+1}|\boldsymbol{N}_{i,t+1}, \boldsymbol{X}_{i,1}, \ldots, \boldsymbol{X}_{i, t+ 1}}= \lambda_{i, t + 1}r_{\ell_{i, t + 1}},
\end{align*}
where $\lambda_{i,t}$ is an \emph{a priori} premium based on the characteristics of the insured (sex, age, etc.) for the period $[t, t+1)$. Several structures for $r_{\ell}$ can be chosen, but we will restrict ourselves to studying linear relativities for the BMS, as defined below.
\begin{Definition}[Linear Bonus-Malus System]\label{def:LBMS} 
A linear relativity is associated with each step of the scale according to the equation
       \begin{align}\label{eq:penalty}
         r_{L(t)} &= 1 + \delta(L(t) - 1),
       \end{align}
       where $\delta$ is the \emph{penalty parameter}.
\end{Definition}

Equation~\eqref{eq:penalty} implies that $r_1 = 1$ and defines the basis risk. These linear relativities, initially proposed by \cite{Gilde}, prevent unwanted situations such as $r_i > r_j$ for $i < j$. 
 
\subsection{Entry level}

The distributions previously mentioned were defined conditionally on the knowledge of the BMS level of insured $i$ at time $1$.  Consequently, the joint probability mass function of policyholder $i$ at time $t$ is given by
\begin{align*}
  &\Prr{N_{i,1} = n_{i,1}, \ldots, N_{i,t} =  n_{i,t} | \boldsymbol{X}_{i,1}, \ldots, \boldsymbol{X}_{i,t}}\\
  &= \sum_{y = 1}^s \Prr{N_{i,1} = n_{i,1}, \ldots, N_{i,t} =  n_{i,t} | L(1) = y, \boldsymbol{X}_{i,1}, \ldots, \boldsymbol{X}_{i,t}}\Prr{L(1) = y}\\
  &= \sum_{y = 1}^s  \prod_{k=1}^{t} \Pr(N_{i,k} = n_{i,k}|L(k) = \ell_{i,k},\boldsymbol{X}_{i,1}, \ldots, \boldsymbol{X}_{i,k})\Prr{L(1) = y},
\end{align*}
where $\Prr{L(1) = y}$ is the probability distribution of the BMS level at time $t=1$, i.e, the first year an insured appears in the database. One must not confuse the first year of driving with the first year an insured is observed in the database, as illustrated in Figure~\ref{fig:BMS2}.

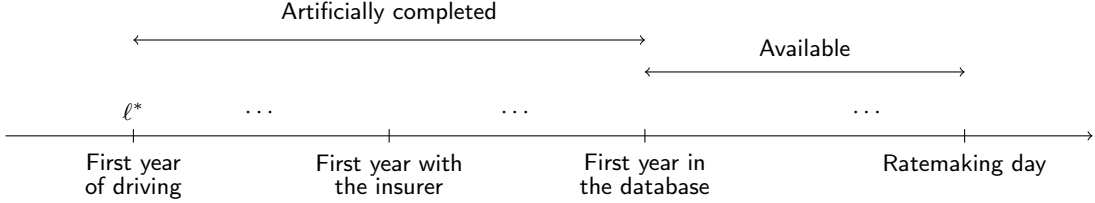
\begin{figure}[!htb]
\begin{center}
\begin{tikzpicture}[scale=0.85, auto, to/.style={->,>=stealth',shorten >=1pt}, every node/.style={font=\fontsize{9pt}{9pt}\selectfont\sffamily, align=center, semithick}]
  \draw[->] (0,0) -- (17,0);
  \draw[<->] (10,1) -- (15,1);
  \draw[<->] (2,1.5) -- (10,1.5);
\foreach \x in {2, 6, 10, 15}   \draw (\x cm,3pt) -- (\x
cm,-3pt);
\draw (2,0) node[below=3pt] {First year \\ of driving};
\draw (6,0) node[below=3pt] {First year with\\ the insurer };
\draw (10,0) node[below=3pt] {First year in\\ the database};
\draw (15,0) node[below=3pt] {Ratemaking day};
\draw (12.5,1.1) node[above=0pt] {Available};
\draw (6,1.6) node[above=0pt] {Artificially completed};
\draw (4,0) node[above=3pt] {$\cdots$};
\draw (8,0) node[above=3pt] {$\cdots$};
\draw (13.5,0) node[above=3pt] {$\cdots$};
\draw (2,0) node[above=2.5pt] {$\ell^*$};


\end{tikzpicture}
\caption{\textit{Example of a policyholder with partial unknown information. In order to estimate $\ell^*$, the insurer must complete the trajectory followed by the bonus-malus system between the first year of driving and the first year in the database.}}\label{fig:BMS2}
\end{center}
\end{figure}

If we want to model the joint distribution for a new driver, $\ell_{i,1}$ can be found directly: it is $\ell^*$, the entry level selected in the construction of the model. Thus 
$\Prr{L(1) = y} = 0$ for all $y$, except $y = \ell^*$. For experienced drivers, this situation is more difficult to deal with.  Insurers must be careful and should not suppose that all new insureds did not have claims in past years, nor must they automatically suppose that they should give  policyholders an entry level $\ell^*$.  In \cite{Boucher2014}, the authors propose to recreate all the possible events of each policyholder from the first year of driving to the first year in the database, to create the distribution of $L_1$.  By taking the average of $L_1$, 
they assigned each driver in the database a value of $\ell_1$. Although the method is intuitive, it is a very complex to implement in addition to requiring a lot of time and IT resources. In this paper, we propose a much simpler method based on the fact that the probability of not filing an accident during a single year is very high, often in the range of $80\% - 95\%$.

\begin{Proposition}\label{prop:JP}
  For a policyholder $i$, let $S_i(t)$ be the level reached by a $(\Psi, s, \ell^*)$ bonus-malus system after $t$ periods from the first exposure year and
$u_i$ be the first observed year in the portfolio. Thus, if
  \begin{align}\label{eq:condition}
    \Prr{N_{i, t} = 0| L(t) = \ell_{i, t}, \boldsymbol{X}_{i,1}, \ldots, \boldsymbol{X}_{i,t}} &> 0.5, \qquad \forall t,
  \end{align}
  then
  \begin{align*}
    \underset{\ell \in \{1, \ldots, s\}}{\text{argmax}}\left(\Prr{S_i(1) = \ell^*, \ldots, S_i(u_i+1) = \ell |\boldsymbol{X}_{i,1}, \ldots, \boldsymbol{X}_{i,t}}\right) &= \max\left(\ell^* - u_i, 1\right).
  \end{align*}
\end{Proposition}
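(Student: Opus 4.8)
The plan is to prove that the ``all--claim--free'' trajectory, along which the system descends $\ell^{*}\to\max(\ell^{*}-1,1)\to\max(\ell^{*}-2,1)\to\cdots$, is the \emph{unique mode} of the joint law of $\bigl(S_i(1),\ldots,S_i(u_i+1)\bigr)$ started from $S_i(1)=\ell^{*}$, and then to identify its last coordinate with the $\text{argmax}_\ell$ of the statement (the final paragraph explains why the statement must be read in this joint sense). Two elementary facts about the recursion \eqref{eq:LBMS} are needed. First, iterating the claim--free update $\ell\mapsto\max(\ell-1,1)$ from $\ell^{*}$ gives $\max(\ell^{*}-k,1)$ after $k$ steps, so the all--claim--free trajectory ends at $\max(\ell^{*}-u_i,1)$; write $m_t:=\max(\ell^{*}-(t-1),1)$ for its successive levels. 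Second, the update $(\ell,n)\mapsto\min(\max(\ell-\mathbb{I}(n=0)+\Psi n,1),s)$ is non-decreasing in $\ell$ and in $n$ (recall $\Psi\ge1$), hence a pointwise larger claim history produces a pointwise larger level trajectory; in particular $(m_t)_t$ is the pointwise smallest trajectory, so any attainable level path $\boldsymbol{s}=(\ell^{*},s_2,\ldots,s_{u_i+1})$ satisfies $s_t\ge m_t$ for every $t$.

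Next I would note that, by \eqref{eq:ML}, and because the set of claim counts carrying level $\ell$ to level $\ell'$ in one year depends only on $(\ell,\ell')$, the probability of a level path factorises as $\Prr{S_i(1)=\ell^{*},\ldots,S_i(u_i+1)=s_{u_i+1}}=\prod_{t=1}^{u_i}P_t(s_t,s_{t+1})$, with $P_t(\ell,\ell')$ the one--step transition probability of the level chain at time $t$. For $s\ge2$ a year with at least one claim moves the system to $\min(\ell+\Psi n,s)$, which lies strictly above the claim--free target $\max(\ell-1,1)$; thus that target is reached only by a claim--free year, so $P_t\bigl(\ell,\max(\ell-1,1)\bigr)=\Prr{N_{i,t}=0\mid L(t)=\ell}=:q_{t,\ell}$, while every other transition out of $\ell$ has probability at most $\Prr{N_{i,t}\ge1\mid L(t)=\ell}=1-q_{t,\ell}$ (summing over the claim counts that produce a given level is absorbed here, disposing of the ceiling $s$). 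I would then use \eqref{eq:condition} in the form $q_{t,\ell}>\tfrac12$ for all $t$ and all $\ell$ — equivalently, by the monotonicity below, only at the top level $s$ — and the structural fact that $q_{t,\ell}$ is non-increasing in $\ell$, which follows from $r_\ell=1+\delta(\ell-1)$ being non-decreasing ($\delta\ge0$, cf.~\eqref{eq:penalty}) and from $\Prr{N=0}$ being decreasing in the mean for the Poisson, NB1 and NB2 laws. Comparing an arbitrary path $\boldsymbol{s}\ne(m_t)_t$ with the all--claim--free one factor by factor then finishes the argument: at a claim--free step of $\boldsymbol{s}$, $P_t(s_t,s_{t+1})=q_{t,s_t}\le q_{t,m_t}$ because $s_t\ge m_t$; at a step carrying a claim, $P_t(s_t,s_{t+1})\le 1-q_{t,s_t}<\tfrac12<q_{t,m_t}$. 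Since all factors are non-negative, $\prod_t q_{t,m_t}>0$, and $\boldsymbol{s}$ has at least one claim step, we get $\prod_tP_t(s_t,s_{t+1})<\prod_t q_{t,m_t}=\Prr{S_i(1)=m_1,\ldots,S_i(u_i+1)=m_{u_i+1}}$, so the all--claim--free trajectory is the strict mode and its last coordinate $m_{u_i+1}=\max(\ell^{*}-u_i,1)$ is the asserted argmax.

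The delicate points are interpretational rather than computational. The proposition is false if read as a statement about the \emph{marginal} law of $S_i(u_i+1)$: with $\Psi=1$, $\ell^{*}=10$, $u_i=3$ and $\Prr{N_{i,t}=0\mid L(t)=\ell}\equiv0.6$ one gets $\Prr{S_i(4)=9}=3\times0.6^2\times0.4=0.432>0.216=0.6^3=\Prr{S_i(4)=7}$, even though $7=\max(\ell^{*}-u_i,1)$; hence $\text{argmax}_\ell$ must mean the mode of the whole trajectory, which is how I read it above. Moreover the factorwise comparison genuinely requires $q_{t,\ell}$ non-increasing in $\ell$: drop it and one can arrange the claim probabilities so that a path with an early claim visits high levels where $q$ is near $1$ while the all--claim--free path sits at low levels where $q$ barely exceeds $\tfrac12$, reversing the inequality — so this monotonicity, a feature of the model (non-decreasing relativities) rather than of the bare hypothesis \eqref{eq:condition}, must be stated explicitly. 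The non-uniqueness of the claim history behind a given level path (an artefact of the ceiling $s$) is the only remaining technicality, and it is absorbed cleanly by working with the transition probabilities $P_t$.
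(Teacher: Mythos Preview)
Your argument follows the paper's strategy --- identify the all-claim-free trajectory as the mode of the joint law of $(S_i(1),\ldots,S_i(u_i+1))$ and read off its terminal level --- but carries it through with substantially more care. The paper's own proof merely records that under \eqref{eq:condition} one has $p_{k,\max(k-1,1)}>\tfrac12$ and $p_{k,j}<\tfrac12$ for every other $j$, and then asserts that the claim-free product is maximal; it does not explain why a product of factors each exceeding $\tfrac12$ must dominate every competing product, and as you point out it need not without further structure. Your factor-by-factor comparison, using the pointwise minimality $s_t\ge m_t$ of the claim-free path together with the monotonicity of $q_{t,\ell}$ in $\ell$, supplies exactly that missing step; your remark that this monotonicity is a feature of the linear-relativity model (via $r_\ell$ non-decreasing and $\Prr{N=0}$ decreasing in the mean) rather than a consequence of the bare hypothesis \eqref{eq:condition} is well taken. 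Your discussion of the joint-versus-marginal reading, with the explicit $\Psi=1$ counterexample showing that the marginal mode of $S_i(u_i+1)$ can differ from $\max(\ell^*-u_i,1)$, is a further clarification that the paper does not provide.
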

\begin{proof}
Based on Equation~\eqref{eq:LBMS}, the transition probabilities are given by (for a policyholder with a claim frequency of $\lambda_{i, t}$ for the period $t$)
 \begin{align}\label{eq:trans}
   p_{k,j}(\lambda_{i, t}) &= \Prr{L(t+1) = j | L(t) = k, \boldsymbol{X}_{i, 1}, \ldots, \boldsymbol{X}_{i, t}} \nonumber\\
   &= \Prr{\min\left(\max\left(k - \mathbb{I}\left(N_{i, t} = 0\right) + \Psi N_{i, t}, 1\right), s\right) =  j| \boldsymbol{X}_{i, 1}, \ldots, \boldsymbol{X}_{i, t}},
 \end{align}
 for $j = 1, \ldots, s$ and $k = 1, \ldots, s$. This implies that $p_{k, \max\left(k - 1, 1\right)}(\lambda_{i, t}) > 0.5$ and $p_{k, j}(\lambda_{i,t}) < 0.5$, $j \ne  \max\left(k - 1, 1\right)$ if $\Prr{N_{i,t} = 0|L(t) = \ell_{i, t}, \boldsymbol{X}_{i, 1}, \ldots, \boldsymbol{X}_{i, t}} > 0.5$. Even if the insurance company does not know the  policyholders' characteristics before they enter the portfolio, it is reasonable to assume that they are such that Equation~\eqref{eq:condition} is satisfied.
 Thus,
 \begin{align*}
   &p_{\ell^*, \max\left(\ell^* - 1, 1\right)}(\lambda_{i, 1})p_{\max\left(\ell^* - 1, 1\right), \max\left(\ell^* - 2, 1\right)}(\lambda_{i, 2}) \times \ldots \times p_{\max\left(\ell^* - u_i + 1, 1\right), \max\left(\ell^* - u_i, 1\right)}(\lambda_{i, u_i})
 \end{align*}
 is the most likely path from level $\ell^*$ to level $\max\left(\ell^* - u_i, 1\right)$. Finally,
 \begin{align*}
   \underset{\ell \in \{1, \ldots, s\}}{\text{argmax}}
   \left(\Prr{S_i(1) = \ell^*, \ldots, S_i(u_i+1) = \ell} \right) &= \max\left(\ell^* - u_i, 1\right).
 \end{align*}
 \end{proof}

\begin{figure}[!htb]
\begin{center}
\begin{tikzpicture}[scale=0.85, auto, to/.style={->,>=stealth',shorten >=1pt}, every node/.style={font=\fontsize{9pt}{9pt}\selectfont\sffamily, align=center, semithick}]
  \draw[->] (0,0) -- (17,0);
  \draw[<->] (8,1.5) -- (15,1.5);
  \draw[<->] (2,2) -- (8,2);
\foreach \x in {2, 4, 8, 10, 12, 15}   \draw (\x cm,3pt) -- (\x
cm,-3pt);
\draw (2,0) node[below=3pt] {First year \\ of driving};
\draw (8,0) node[below=3pt] {First year in\\ the database};
\draw (15,0) node[below=3pt] {Ratemaking day};
\draw (11,1.6) node[above=0pt] {Available};
\draw (5,2.1) node[above=0pt] {Artificially completed};
\draw (6,0) node[above=3pt] {$\cdots$};
\draw (13.5,0) node[above=3pt] {$\cdots$};
\draw (2,0) node[above=2.5pt] {$\ell^*$};
\draw (8,0) node[above=2.5pt] {$L(1) = \ell_{i,1}$};
\draw (10,0) node[above=2.5pt] {$L(2) = \ell_{i,2}$};
\draw (12,0) node[above=2.5pt] {$L(3) = \ell_{i,3}$};
\draw (15,0) node[above=2.5pt] {$L(T_i) = \ell_{i,T_i}$};

\draw (2,0.5) node[above=2.5pt] {$S(1)$};
\draw (8,0.5) node[above=2.5pt] {$S(u_i + 1)$};
\draw (10,0.5) node[above=2.5pt] {$S(u_i + 2)$};
\draw (12,0.5) node[above=2.5pt] {$S(u_i + 3)$};
\draw (15,0.5) node[above=2.5pt] {$S(u_i + T_i)$};


\end{tikzpicture}
\caption{\textit{Example of a policyholder with partial unknown information ($u_i$ years are unknown and $T_i$ years are known).}}\label{fig:BMS22}
\end{center}
\end{figure}
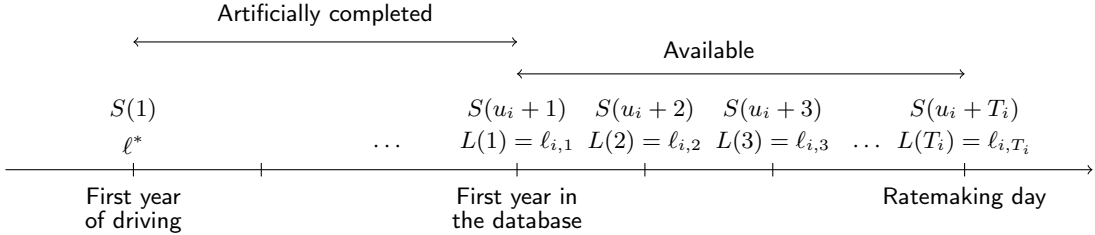

Figure \ref{fig:BMS22} illustrates the situation, and the new variables introduced. Based on the result of the last proposition, we then select the entry level of a new policyholder with $u$ years of experience as $\max\left(\ell^* -u, 1\right)$, or in other words $\Prr{L(1) = y} = 0$ for all $y$, except $y = \max\left(\ell^* -u, 1\right)$. Therefore, each year of driving experience results in a decrease of one level of the bonus-malus system, which is already one of the ways insurance companies deal with experienced drivers in merit rating plans. So, this way of selecting $\ell_{i,1}$ not only simplifies the BMS-panel model, but it 
theoretically justifies the procedure already in use.

\subsection{Parameters}

As the joint distribution of $N_{i,T_i}$ is now completely defined for all insured in the database. We can thus summarize the steps needed to estimate all the parameters of the model. 
First, the actuary has to select, or to estimate, the following structural components of the BMS-panel model:
\begin{itemize}
\item the number of levels $s$ of the system;
\item the jump parameter $\Psi$ for each claim;
\item the entry level $\ell^*$ for a new driver;
\item the risk characteristics $\boldsymbol{X}_{i,1}, \ldots, \boldsymbol{X}_{i,T_i}$ and the associated parameters $\boldsymbol{\beta}$ and 
\item the underlying count distribution.  
\end{itemize}

Then, for this specific BMS-panel model, $N_{i,T_i}$, the \emph{a priori} parameters ($\boldsymbol{\beta}$) linking the covariates with the expected value and the \emph{penalty parameter} $\delta$ must be estimated by maximizing the likelihood function. When structural components of the BMS-panel model are selected, parameter estimation is direct and can be done using standard statistical softwares. Other parameters linked to the underlying count distribution should also be estimated.  For example, if an NB1 or a NB2 distribution is used, an estimator of the overdispersion parameter $\alpha$ should also be found.

Note that for parameters $s$, $\Psi$ and $\ell^*$, the structure of the model provides us with some additional information that allows us to reduce the parameter space to a lattice. This is a well-known problem in statistics, e.g., see \cite{HAM} for simple cases or \cite{CHO} for asymptotic properties. To obtain the best BMS-panel model, we adjust all combinations of $s$, $\Psi$ and $\ell^*$ and we select the model that generates the best likelihood profile and/or the best prediction capacity (based on out-of-sample analysis).

\subsection{Properties of the BMS-model}
\label{ssec:cov}

Amoung the properties of the BMS-panel model, we are interested in evaluating the covariance between premiums paid by a policyholder for two periods. To do this, it is necessary to study the covariance between $N_t$ and $N_{t+k}$, $k = 1, 2, \ldots$, conditional to $L(t) = \ell_t$.\\

For a BMS-panel model, the one-year probability of the random variable $L$ going from BMS level $\ell_{i, t}$ to BMS level $\ell_{i, t+1}$ is denoted by $p_{\ell_{i,t}, \ell_{i, t+1}}(\lambda_{i, t})$ as defined by Equation~\eqref{eq:trans}. For a policyholder, we can construct a transition probability matrix
\begin{align*} 
\boldsymbol{P}(\lambda_{i,t})  &=  
\begin{bmatrix}
p_{1,1}(\lambda_{i,t}) &  p_{1,2}(\lambda_{i,t}) & \cdots & p_{1,s}(\lambda_{i,t}) \\ 
p_{2,1}(\lambda_{i,t}) &  p_{2,2}(\lambda_{i,t}) & \cdots & p_{2,s}(\lambda_{i,t}) \\ 
\vdots &  \vdots & \ddots & \vdots \\ 
p_{s,1}(\lambda_{i,t}) &  p_{s,2}(\lambda_{i,t}) & \cdots & p_{s,s}(\lambda_{i,t}) \\ 
\end{bmatrix}.  
\end{align*}
We can show that for all $K = 1, 2, \ldots$, we have
\begin{align*} 
\boldsymbol{P}^{(K)}(\lambda_{i, t})  &= \boldsymbol{P}^{K}(\lambda_{i,t}),
\end{align*}
meaning that the transition probability matrix over $K$ time periods is simply the $K^{\text{th}}$ power of the annual transition probability matrix $\boldsymbol{P}(\lambda_{i,t})$.

\begin{Proposition}\label{prop:cov}
  In a BMS-panel model, at the beginning of a period $t$, the conditional covariance between $N_{i,t}$ and $N_{i, t+j}$, $j = 1, 2, \ldots$ is 
  \begin{align*}
    &\Cov{N_{i, t}, N_{i, t+j}|\ell_{i, t}, \boldsymbol{X}_{i, 1}, \ldots, \boldsymbol{X}_{i, t}} = \lambda_{i, t+j} \\
    &\phantom{=}\times \sum_{m=1}^s r_m\left(\Esp{N_{i,t} p_{\min\left(\max\left(\ell_{i,t} - \mathbb{I}(N_{i,t} = 0) + \Psi N_{i,t}, 1\right), s\right),m}^{(j-1)}\left(\lambda_{i,t}\right)|\ell_{i,t}, \boldsymbol{X}_{i, 1}, \ldots, \boldsymbol{X}_{i, t}} - \lambda_{i, t}r_{\ell_{i,t}}p_{\ell_{i,t}, m}^{(j)}\left(\lambda_{i,t}\right)\right),
  \end{align*}
  where the transition probabilities are extracted from the transition matrices $\boldsymbol{P}(\lambda_{i,t})^{(j)}$ and $\boldsymbol{P}(\lambda_{i,t})^{(j-1)}$ assuming that $\boldsymbol{X}_{i, t} = \boldsymbol{X}_{i, t+1} = \boldsymbol{X}_{i, t+2} = \cdots$.
\end{Proposition}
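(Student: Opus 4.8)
The plan is to expand the covariance from its definition,
\begin{align*}
\Cov{N_{i,t}, N_{i,t+j}\mid \ell_{i,t}} &= \Esp{N_{i,t}N_{i,t+j}\mid\ell_{i,t}} - \Esp{N_{i,t}\mid\ell_{i,t}}\,\Esp{N_{i,t+j}\mid\ell_{i,t}},
\end{align*}
and to evaluate each piece using the Markov structure of the BMS level process. Two facts do all the work. First, by the conditional-on-level specification $\pi^{\text{BMS}}_{i,t+1}$ together with the factorisation~\eqref{eq:ML}, conditionally on $L(t+j)=m$ the count $N_{i,t+j}$ has mean $\lambda_{i,t+j}r_m$ and does not otherwise depend on the past. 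Second, conditionally on $\ell_{i,t}$ and on the realised value of $N_{i,t}$, the next level is the \emph{deterministic} quantity $L(t+1)=\min(\max(\ell_{i,t}-\mathbb{I}(N_{i,t}=0)+\Psi N_{i,t},1),s)$, and $L(t+j)$ is then reached from $L(t+1)$ in $j-1$ further steps; hence, using the stationary-covariate assumption $\boldsymbol{X}_{i,t}=\boldsymbol{X}_{i,t+1}=\cdots$ that gives $\boldsymbol{P}^{(K)}=\boldsymbol{P}^{K}$, we have $\Prr{L(t+j)=m\mid N_{i,t},\ell_{i,t}}=p^{(j-1)}_{L(t+1),m}(\lambda_{i,t})$.

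First I would handle the mixed moment. Conditioning on $\{N_{i,t},\ell_{i,t}\}$ and applying the tower property together with the two facts above,
\begin{align*}
\Esp{N_{i,t+j}\mid N_{i,t},\ell_{i,t}} &= \sum_{m=1}^{s}\lambda_{i,t+j}\,r_m\,p^{(j-1)}_{L(t+1),m}(\lambda_{i,t}),
\end{align*}
with $L(t+1)$ the deterministic function of $N_{i,t}$ displayed above. Multiplying by $N_{i,t}$ and taking the conditional expectation given $\ell_{i,t}$ yields
\begin{align*}
\Esp{N_{i,t}N_{i,t+j}\mid\ell_{i,t}} &= \lambda_{i,t+j}\sum_{m=1}^{s}r_m\,\Esp{N_{i,t}\,p^{(j-1)}_{\min(\max(\ell_{i,t}-\mathbb{I}(N_{i,t}=0)+\Psi N_{i,t},1),s),m}(\lambda_{i,t})\mid\ell_{i,t}}.
\end{align*}

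Next I would compute the two marginals. The conditional-on-level mean gives $\Esp{N_{i,t}\mid\ell_{i,t}}=\lambda_{i,t}r_{\ell_{i,t}}$, while conditioning on $L(t+j)$ and using the $j$-step transition probabilities $p^{(j)}_{\ell_{i,t},m}(\lambda_{i,t})$ (again from $\boldsymbol{P}^{(j)}=\boldsymbol{P}^{j}$) gives $\Esp{N_{i,t+j}\mid\ell_{i,t}}=\lambda_{i,t+j}\sum_{m=1}^{s}r_m\,p^{(j)}_{\ell_{i,t},m}(\lambda_{i,t})$. Subtracting the product of these marginals from the mixed moment and pulling $\lambda_{i,t+j}$ and the sum over $m$ outside produces exactly the claimed identity.

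The tower-property and linearity steps are routine; the delicate point — the one I would write out most carefully — is the second fact above: one must invoke~\eqref{eq:ML} explicitly to argue that the conditional law of $L(t+j)$ given the entire history collapses to a function of $L(t+1)$ alone (Markov property), so that $N_{i,t}$ enters the mixed moment only through the deterministic map defining $L(t+1)$, and one must keep the stationary-covariate hypothesis in force, since otherwise the $(j-1)$- and $j$-step transition operators would be products of distinct matrices $\boldsymbol{P}(\lambda_{i,t+k})$ rather than powers of a single $\boldsymbol{P}(\lambda_{i,t})$, and the formula as stated would not hold.
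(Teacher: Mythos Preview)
Your proposal is correct and follows essentially the same approach as the paper's proof: both compute the covariance from its definition by evaluating $\Esp{N_{i,t}\mid\ell_{i,t}}$, $\Esp{N_{i,t+j}\mid\ell_{i,t}}$, and $\Esp{N_{i,t}N_{i,t+j}\mid\ell_{i,t}}$ separately, using the Markov structure of the level process, the deterministic map $\ell_{i,t}\mapsto L(t+1)$ given $N_{i,t}$, and the stationary-covariate assumption to reduce multi-step transitions to powers of a single matrix. Your write-up is slightly more compact---you invoke the $j$-step transition probabilities directly, whereas the paper unfolds the recursion over intermediate levels $q_1,\ldots,q_{j-1}$ explicitly---but the argument is the same.
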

\begin{proof}
  We have (we drop the reference to $i$ and to $\boldsymbol{X}_{i, 1}$, $\ldots$, $\boldsymbol{X}_{i, t}$ in order to ease the presentation)
\begin{align*}
\Esp{N_t|\ell_t} &= \lambda_t r_{\ell_t}\\
\Esp{N_{t+j}|\ell_t} &= \sum_{n = 0}^{\infty} n \Prr{N_{t+j} = n|\ell_t} \\
                 &= \sum_{m = 0}^{\infty}  \sum_{n=0}^{\infty} n \Prr{N_{t+j} = n|L(t+j) = m, \ell_t} \Prr{L(t+j) = m|\ell_t}. \intertext{The inner sum is}
                   \sum_{n=0}^\infty n\Prr{N_{t+j} = n | L(t+j) = m, \ell_t}
                 &= \sum_{n=0}^\infty n\Prr{N_{t+j} = n |  L(t+j) = m}\\
                 &= \Esp{N_{t+j}|  L(t+j) = m}\\
                 &= \lambda_{t+j}r_m.
\end{align*}
Thus,
\begin{align*}
 	  	\Esp{N_{t+j}|\ell_t} &= \sum_{m=0}^{\infty}\lambda_{t + j}r_m \Prr{L(t+j) = m|\ell_t}\\
                                     &= \sum_{m=0}^{\infty}\lambda_{t+j}r_m\sum_{q_1 = 0}^\infty  \Prr{L(t+j) = m| L(t+j-1) = q_1, \ell_t} \Prr{L(t+j-1) = q_1 | \ell_t}\intertext{and, assuming that $\boldsymbol{X}_{i, t} = \boldsymbol{X}_{i, t+1} = \boldsymbol{X}_{i, t+2} = \cdots$, we have}
                                     &= \lambda_{t+j}\sum_{m=0}^\infty r_m\sum_{q_1 = 0}^\infty p_{q_1, m}(\lambda_{i,t}) \Prr{L(t+j-1)= q_1 | \ell_t}. \intertext{Recursively, we obtain}
                                     &= \lambda_{t+j}\sum_{m=0}^\infty r_m\sum_{q_1 = 0}^\infty p_{q_1, m}(\lambda_{i,t})  \cdots \sum_{q_{j-1}=0}^\infty p_{q_{j-1}, q_{j-2}}(\lambda_{i,t}) p_{\ell_t, q_{j-1}}(\lambda_{i,t}) , \intertext{and}
   &= \lambda_{t+j}\sum_{m=0}^\infty r_mp_{\ell_t, m}^{(j)}(\lambda_{i,t}). 
\end{align*}
\begin{align*}
\Esp{N_t N_{t+j}|\ell_t} &= \sum_{n_0 = 0}^{\infty} \sum_{n_j = 0}^{\infty} n_0n_j \Prr{N_t = n_0, N_{t+j} = n_j|\ell_t} \\
 &= \sum_{n_0 = 0}^{\infty} \sum_{n_j = 0}^{\infty} n_0 n_j \Prr{N_t = n_0|\ell_t} \Prr{N_{t+j} = n_j|\ell_t, N_t = n_0} \\
                         &= \sum_{n_0 = 0}^{\infty}\sum_{n_j=0}^\infty n_0n_j \Prr{N_t = n_0|\ell_t} \sum_{m = 0}^{\infty}  \Prr{N_{t+j} = n_j|L(t+j) = m, \ell_t, N_t = n_0}\\
  &\phantom{=}\times\Prr{L(t+j) = m |\ell_t, N_t = n_0} \\
                         &= \sum_{n_0=0}^{\infty}\sum_{n_j=0}^\infty\sum_{m=0}^\infty n_0 n_j \Prr{N_t = n_0|\ell_t} \Prr{N_{t+j} = n_j|L(t+j) = m}\\
  &\phantom{=}\times \sum_{q_1=0}^\infty \Prr{L(t+j) = m|  L(t+j-1) = q_1}\Prr{L(t+j-1) = q_1| \ell_t, N_t=n_0}.\intertext{Recursively, we obtain}
                         &= \sum_{n_0=0}^{\infty}\sum_{n_j=0}^\infty\sum_{m=0}^\infty n_0 n_j \Prr{N_t = n_0|\ell_t}\Prr{N_{t+j} = n_j  | L(t+j) = m}\\
  &\phantom{=} \times p_{\min\left(\max\left(\ell_t - \mathbb{I}(n_0 = 0) + \Psi n_0, 1\right), s\right), m}^{(j-1)}\left(\lambda_{i,t}\right)\\
                         &= \lambda_{t+j} \sum_{n_0=0}^{\infty} \sum_{m=0}^\infty r_m n_0\Prr{N_t = n_0 |  \ell_t}p_{\min\left(\max\left(\ell_t - \mathbb{I}(n_0 = 0) + \Psi n_0, 1\right), s\right), m}^{(j-1)}\left(\lambda_{i,t}\right)\\
  &= \lambda_{t+j} \sum_{m=0}^\infty r_m \Esp{N_t p_{\min\left(\max\left(\ell_t - \mathbb{I}(N_t = 0) + \Psi N_t, 1\right), s\right), m}^{(j-1)}\left(\lambda_{i,t}\right)|  \ell_t}
\end{align*}
and the result follows directly from the definition of a covariance.
\end{proof}

For $j = 1$, i.e., for two successive periods, the result of the Proposition~\ref{prop:cov} simplifies to
\begin{align*}
  \Cov{N_{i,t}, N_{i,t+1}|\ell_{i,t}, \boldsymbol{X}_{i,1}, \ldots, \boldsymbol{X}_{i, t}} &= \sum_{n=0}^\infty \sum_{q=0}^\infty nq\Prr{N_{i,t} = n|\ell_{i,t}, \boldsymbol{X}_{i,1}, \ldots, \boldsymbol{X}_{i, t}}\\
  &\phantom{=}\times \Prr{N_{i, t+1} = q|\ell_{i,t+1}, \boldsymbol{X}_{i,1}, \ldots, \boldsymbol{X}_{i, t}}\\
  &\phantom{=}- \lambda_{i,t}\lambda_{i,t+1}r_{\ell_{i,t}}\sum_{m=1}^s r_mp_{\ell_{i,t}, m}(\lambda_{i,t}),
\end{align*}
with $\ell_{i, t+1} = \ell_{i,t} - \mathbb{I}(n = 0) + n\Psi$.

\section{Empirical Illustrations}
\label{sec:illustrations}

We analyze a database from a portfolio of general liability insurance policies for private individuals from a major Canadian insurance company. Because bonus-malus systems are generally built for auto insurance products, we only consider the private use of vehicles in our project. The sample contains information about $\numprint{429333}$~contracts from $\numprint{140714}$~policyholders, and runs from year~$2012$ to year~$2016$. The sample has the following properties:
\begin{enumerate}
\item We only keep policyholders with a maximum of one insured car through all their observed contracts in order to avoid possible within-contract dependence;
\item We only keep policyholders with complete coverages on all their observed contracts, to avoid censorship where some claims would not be covered or observed;
\item We remove from the database policyholders who use their car for commercial purpose since they might exhibit different driving behaviors.   
\end{enumerate}

These selection criteria might affect the results.  For example, by choosing only insureds with one car, we might have a higher proportion of single insured than normally.  However, the purpose of this study is not to explain the accident process by covariates 
nor the claiming process, but only to present interesting count models that can be considered for ratemaking. 
To evaluate the performance our models, we split the database into two components: a fitting set ($\numprint{98589}$~policyholders) and a validation set 
($\numprint{42125}$~policyholders). Table~\ref{tab:desc} describes the $8$~covariates selected in the modeling. For every contract, we have initial information at the beginning of the period and we are interested in predicting the number of claims (excluding comprehensive claims). The average claim frequency is approximately $6.5\%$ and we observe a maximum of $5$~claims per contract. The distribution of the number of contracts observed per insured is shown in Table~\ref{tab:pctan}, for an average of $3.05$~contracts per policyholder.

\begin{table}
  \centering
  \begin{tabular}{ll}
    \toprule
    Variable & Description\\
    \midrule
    $X_1 = 1$ & The policyholder is female.\\
    $X_2 = 1$ & The policyholder is married.\\
    $X_3 = 1$ & The policyholder is less than $30$~years old.\\
    $X_4 = 1$ & The policyholder is between $30$ and $50$~years old.\\
    $X_5 = 1$ & The distance driven per year is less than $\numprint{10000}$~km.\\
    $X_6 = 1$ & The distance driven per year is between $\numprint{10000}$~km and $\numprint{20000}$~km.\\
    $X_7 = 1$ & The distance driven per year is between $\numprint{20000}$~km and $\numprint{30000}$~km.\\
    $X_8 = 1$ & The vehicule is used to commute.\\
    \bottomrule
  \end{tabular}
  \caption{Dichotomous variables in the database.}
  \label{tab:desc}
\end{table}

\begin{table}
  \centering
  \begin{tabular}{ll}
    \toprule
    Number of year & Percentage\\
    \midrule
$1$  & $21.42\%$ \\
$2$  & $17.73\%$ \\
$3$  & $11.66\%$ \\
$4$  & $32.57\%$ \\
$5$  & $16.62\%$ \\
    \bottomrule
  \end{tabular}
  \caption{Distribution of the number of contracts}
  \label{tab:pctan}
\end{table}

\subsection{Count distributions}
\label{ssec:RESBMS}

We present, in Table~\ref{tab:res1}, fitting results for various models. We observe that the Harvey-Fernandes modification of both the MVNB and the NBBeta distribution did not improve the fitting, resulting in a value of $\nu=1$ for both models.  We suppose that this is explained by the average of $3.05$~contracts per policyholder, where the weight of past claims can still be supposed equal in the prediction of new claims.

\begin{table}
  \centering
  \begin{tabular}{lllll}
    \toprule
    Model & $\#$ parameters & Logl. & AIC & BIC \\
    \midrule
Poisson			&$9$	&$\numprint{-85526,64}$	&$\numprint{171071.28}$	&$\numprint{171280.34}$ \\
NB2			&$10$	&$\numprint{-85238.04}$	&$\numprint{170496.08}$	&$\numprint{170728.37}$\\
NB1			&$10$	&$\numprint{-85149.08}$	&$\numprint{170318.16}$	&$\numprint{170550.45}$\\
MVNB			&$10$	&$\numprint{-84812.54}$	&$\numprint{169645.08}$	&$\numprint{169877.37}$\\
NBBeta			&$11$	&$\numprint{-84727.13}$	&$\numprint{169476.26}$	&$\numprint{169731.78}$\\
HF-MVNB			&$11$	&$\numprint{-84812.54}$	&$\numprint{169647.08}$	&$\numprint{169902.60}$\\
HF-NBBeta		&$12$	&$\numprint{-84727.13}$	&$\numprint{169478.26}$	&$\numprint{169757.01}$\\ \hline
MVNB$^*$		&$10$	&$\numprint{-84611.88}$	&$\numprint{169243.76}$	&$\numprint{169476.05}$\\
NBBeta$^*$		&$11$	&$\numprint{-84426.63}$	&$\numprint{168875.26}$	&$\numprint{169130.78}$\\
HF-MVNB$^*$		&$11$	&$\numprint{-84460.82}$	&$\numprint{168943.64}$	&$\numprint{169199.16}$\\
HF-NBBeta$^*$	        &$12$	&$\mathbf{\numprint{-84355.07}}$	&$\mathbf{\numprint{168734.14}}$	&$\mathbf{\numprint{169012.89}}$\\
    \bottomrule
  \end{tabular}
  \caption{Fitting results}
  \label{tab:res1}
\end{table}

We can observe two results for each panel data count distribution (with and without $^*$). This can be explained by additional informations in our database. Indeed, the province of Ontario uses Autoplus, a database that provides detailed automobile claims and policy history. Thus, it is possible for insurers to have access to the past claims history of a policyholder up to $10$~years.  In the construction of the database used for this project, we are then able to know the past 10 years of claim experience for each insured (before their first insurance contract observed in the database). However, only the claim experience for each insured is available, and the risk characteristics ($\boldsymbol{X})$ for those $10$~past years are unknown.  Because we will use this information in the BMS-panel model, we also tried to include this past experience in panel data models. Thus, we propose  a slightly modified version of the MVNB model and the NBBeta model, which we note MVNB$^*$ and NBBeta$^*$. Specifically, we assume that the distribution of the random variable $\Theta_i$ has already been adapted to account for past claims. Following a similar development to that described above,
we could show that for a policyholder with $m$~years of experience before entering the database, the distribution of $\Theta_i$ for the first observed contract of an insured
(gamma for the MVNB, beta for the NBBeta) will have the following parameters
    \begin{align*}
      \alpha^* &= \alpha + \sum_{j=1}^{m_i} n_{i,-j}^*   \quad \text{ and } \quad \gamma^* = \gamma + \sum_{j=1}^{m_i} \lambda_{i,-j},
    \end{align*}
where $m_i$ is the minimum value between $10$ and the driving experience (in years) for insured $i$, $n_{i,-t}^*$ is the observed number of claim(s) $t$~year(s) before the entry of the policyholder into the database. Because we do not observe the $\lambda_{i,-j}$, $j = 1, \ldots, m_i$, we approximate them by $\overline{\lambda} = 6.5\%$, the average frequency of the database.\\
 
Finally, the predicted premium for the MVNB$^*$ can be shown to be equal to
\begin{align}\label{eq:predMVNB}
\pi_{i, t + 1}^{\text{MVNB}^*} &= \lambda_{i,t + 1} 
\left(\frac{\sum_{k=1}^{t} n_{i,k} + \alpha^*}{\sum_{k=1}^{t}\lambda_{i,k} 
+ \gamma^*}\right) \nonumber \\
&=  \lambda_{i,t+ 1} 
\left(\frac{\sum_{j=1}^{m_i} n_{i,-j}^* + \sum_{k=1}^{t} n_{i,k} + \alpha}{m_i \lambda + \sum_{k=1}^{t} \lambda_{i,k} + \gamma}\right),
\end{align}
where all the available claim experience can be used to estimate the future premium.  The predicted premium for the NB-Beta$^*$ can also be computed straightforwardly.  Note that modified Harvey-Fernandes approaches (HF-MVNB$^*$ and HF-NBBeta$^*$) can also be constructed using the same procedure, with 

\begin{align}\label{eq:rec}
  \alpha_{i, 1}^* = \left(\nu\right)^{m_i}\alpha_{0} + \sum_{k=1}^{m_i}\left(\nu\right)^{k}n_{i, -k}\nonumber \intertext{and}
  \gamma_{i, 1}^* = \left(\nu\right)^{m_i}\gamma_{0} + \sum_{k=1}^{m_i}\left(\nu\right)^{k}\lambda_{i, -k}.
\end{align}

When we include this new information, the HF-NBBeta$^*$ outperforms all the other distributions in fitting statistics, even if we considerer penalized criteria such as the AIC or the BIC. We also use the out-of-sample data to compare the models.  Table~\ref{tab:res2} shows the results where we compute two measures evaluating the prediction capacity of each model: the mean-square error (MSE), and~$-$~because we are dealing with count data and not a continuous distribution~$-$~a loglikehood statistic from the Poisson distribution. Based on both out-of-sample statistics, the HF-MVNB$^*$ seems to offer a better prediction capacity.
\begin{table}
  \centering
  \begin{tabular}{lll}
    \toprule
    Model & Logl.(Poisson)  & MSE \\
    \midrule
Poisson			& $\numprint{-36966.48}$ &$\numprint{10609.54}$ \\ 
NB2 			& $\numprint{-36860.55}$ &$\numprint{10611.03}$ \\
NB1 			& $\numprint{-36815.96}$ &$\numprint{10609.53}$ \\
MVNB			& $\numprint{-36731.55}$ &$\numprint{10567.37}$ \\
NB-Beta 		& $\numprint{-36830.10}$ &$\numprint{10601.31}$ \\
HF-MVNB 		& $\numprint{-36731.55}$ &$\numprint{10567.37}$ \\
HF-NBBeta 		& $\numprint{-36830.10}$ &$\numprint{10601.31}$ \\ \hline
MVNB$^*$		& $\numprint{-36588.47}$ &$\numprint{10552.34}$ \\
NB-Beta$^*$		& $\numprint{-36628.14}$ &$\numprint{10564.14}$ \\
HF-MVNB$^*$		& $\mathbf{\numprint{-36543.69}}$ & $\mathbf{\numprint{10543.84}}$ \\
HF-NBBeta$^*$	        & $\numprint{-36628.04}$ &$\numprint{10570.82}$ \\
    \bottomrule
  \end{tabular}
  \caption{Out-of-sample statistics.}
  \label{tab:res2}
\end{table}

\subsection{Bonus-Malus systems panel models}
\label{ssec:RESBMS}

We adjust a bonus-malus system panel model with Poisson, NB1 and NB2 underlying distributions. As we did for the previous models marked with $^*$, we also consider past claim experience from \textit{Autoplus}.  This $10$-year history of past claims allows us to find $\ell_1$, the BMS level of each insured when they are observed in the database at the first time. We choose several combinations of structure parameters on a grid given by $s = 2, 3, \ldots, S$, $\Psi = 1, 2, \ldots, s$ and $\ell^*=1,...,s$, where $S = 22$ for the Poisson distribution, $S = 16$ for the NB1 distribution and $S = 15$ for the NB2 distribution.  These values of $S$ cover between $\numprint{1000}$ and $\numprint{3000}$~possibilities for each underlying distribution. Because each estimation step takes at least $2$ to $5$~minutes on a personal computer, covering all those possibilities is very time consuming.  We are looking for a procedure that could help us to narrow the space of $\{s, \Psi, \ell^*\}$ to find the best combinations for a specific underlying distribution.
\begin{table}
  \centering
  \begin{tabular}{llllllll}
    \toprule
Distribution & \# par. & $\Psi$ &	$s$ &	$\ell^*$ & Loglikelihood  & AIC &BIC \\		
    \midrule
Poisson &$13$&$6$ &$11$ &$1$ &$\numprint{-84672.04}$ &$\numprint{169370.08}$ &$\numprint{169508.07}$ \\ 
		&$13$&$6$ &$11$ &$2$ &$\numprint{-84672.63}$ &$\numprint{169371.26}$ &$\numprint{169509.25}$ \\
		&$13$&$6$ &$10$ &$1$ &$\numprint{-84673.27}$ &$\numprint{169372.54}$ &$\numprint{169510.53}$ \\
		&$13$&$6$ &$10$ &$2$ &$\numprint{-84673.89}$ &$\numprint{169373.78}$ &$\numprint{169511.77}$ \\ \hline
NB1		&$14$&$6$ &$11$ &$1$ &$\mathbf{\numprint{-84331.55}}$ &$\mathbf{\numprint{168691.11}}$ &$\mathbf{\numprint{168839.71}}$ \\ 
		&$14$&$6$ &$10$ &$1$ &$\numprint{-84331.92}$ &$\numprint{168691.84}$ &$\numprint{168840.45}$ \\
		&$14$&$6$ &$11$ &$2$ &$\numprint{-84332.13}$ &$\numprint{168692.27}$ &$\numprint{168840.87}$ \\
		&$14$&$6$ &$10$ &$2$ &$\numprint{-84332.52}$ &$\numprint{168693.05}$ &$\numprint{168841.65}$ \\ \hline
NB2		&$14$&$6$ &$11$ &$1$ &$\numprint{-84442.99}$ &$\numprint{168913.97}$ &$\numprint{169062.58}$ \\
		&$14$&$6$ &$11$ &$2$ &$\numprint{-84443.56}$ &$\numprint{168915.13}$ &$\numprint{169063.73}$ \\
		&$14$&$6$ &$10$ &$1$ &$\numprint{-84444.49}$ &$\numprint{168916.98}$ &$\numprint{169065.58}$ \\
		&$14$&$6$ &$10$ &$2$ &$\numprint{-84445.09}$ &$\numprint{168918.19}$ &$\numprint{169066.79}$ \\ 
    \bottomrule
  \end{tabular}
  \caption{BMS-panel data statistics.}
  \label{tab:res3}
\end{table}
 
For each model, we estimate $\boldsymbol{\beta}$, $\delta$ and an overdispersion parameter (for both NB distributions) using the estimation set. Finally, as we did for the other distributions, we calculate the value of the mean square error, as well as the loglikelihood of a Poisson distribution for each fitted model based on the test sample, in both cases to prevent over-adjustment. In Table~\ref{tab:res3}, we present the results of the four best SBM-panel models for all three underlying distributions, evaluated on the estimation dataset. We observe that the model $-1/+6$ with $s=11$ is the best model for the Poisson, the NB1 and the NB2.  The four best models for each underlying distribution have the same structural parameters.  Only the second best model of the NB1 ranks third for the Poisson and the NB2.  The value of $\delta$ does not seem to depend on the underlying distribution but rather depends on the $3$~structural parameters $s, \Psi$ and $\ell^*$.  Finally, we can see that the underlying distribution NB1 offers better fitting statistics than the NB2, reflecting what we already observed for cross-section data distributions. Formally, some statistical tests could be done to determine if the overdispersion parameter of the NB1 and the NB2 is statistically significant, but the huge differences between loglikelihoods already show that is the case. As done previously, we also use compute out-of-sample statistics to assess the prediction capacity of BMS-models.  Table~\ref{tab:res33} presents the results.
\begin{table}
  \centering
  \begin{tabular}{llllll}
    \toprule
Distribution & $s$ &	$\Psi$ &	$\ell^*$ & MSE & Logl.(Poisson) \\		
    \midrule
Poisson 
&$11$	&$6$	&$1$	&$\numprint{10550.35}$	&$\numprint{-36587.65}$\\
&$11$	&$6$	&$2$	&$\numprint{10550.41}$	&$\numprint{-36587.94}$\\
&$10$	&$6$	&$1$	&$\numprint{10549.90}$	&$\numprint{-36586.96}$\\
&$10$	&$6$	&$2$	&$\numprint{10549.91}$	&$\numprint{-36587.22}$\\ \hline
NB1 
&$11$	&$6$	&$1$	&$\numprint{10550.41}$	&$\numprint{-36587.57}$\\
&$10$	&$6$	&$1$	&$\numprint{10549.93}$	&$\numprint{-36586.85}$\\
&$11$	&$6$	&$2$	&$\numprint{10550.42}$	&$\numprint{-36587.80}$\\
&$10$	&$6$	&$2$	&$\numprint{10549.95}$	&$\numprint{-36587.11}$\\ \hline
NB2 
&$11$	&$6$	&$1$	&$\numprint{10552.06}$	&$\numprint{-36586.44}$\\
&$11$	&$6$	&$2$	&$\numprint{10552.14}$	&$\numprint{-36586.74}$\\
&$10$	&$6$	&$1$	&$\numprint{10551.54}$	&$\numprint{-36585.73}$\\
&$10$	&$6$	&$2$	&$\numprint{10551.61}$	&$\numprint{-36586.04}$\\
    \bottomrule
  \end{tabular}
  \caption{Out-of-sample statistics for the BMS-panel model.}
  \label{tab:res33}
\end{table} 

The best model selected by the estimation dataset is the NB1, $-1/+6$ with $s=11$ and $\ell^* = 1$.  Out-of-sample statistics show that this model has an interesting predictive capacity, only outperformed by the HF-MVNB$^*$.  We also see that the BMS-panel data model 
makes stable predictions, whereas the $15$ other BMS-panel models selected show similar out-of-sample statistics (the greatest difference is less than $0.005\%$). 

Finally, because the best model selected by the data is the NB1, $-1/+6$ with $s=11$ and $\ell^* = 1$, we point out that
\begin{itemize}
	\item a policyholder reaches the top of the BMS after $2$~claims in $3$~years, resulting in an annual premium $2.2$~times higher than the premium of a policyholder located at level~$1$;
	\item after one claim, it takes $6$~years for an insured 
	revert to their initial position; and 
	\item new drivers are assigned an entry level of $1$, the best of the BMS.  This result highly depends on the data used, but is in total opposition to what \cite{Boucher2014} obtained.  Indeed, these authors conclued that new drivers should be assigned to the worst BMS scale.
\end{itemize}

\subsection{Parameters Analysis}

Table~\ref{tab:resbeta} presents estimated values and standard errors for all $\beta$ parameters, for some of the best models namely the NB1 and the HF-NBB models, and the BMS-NB1 model with $s=11$, $\ell^* = 1$ and $\Psi = 6$. For the latter, we do not consider the variability of the structural parameters $s, \ell^*$ and $\Psi$ in the analysis. As already noticed by \cite{Boucher2014}, it is interesting to note the difference between the $\widehat{\beta}$, particularly when comparing cross-section data model (such as the NB1) and the other two models that allow for merit rating.  Estimated parameters $\widehat{\beta_5}$, $\widehat{\beta_6}$ and $\widehat{\beta_7}$ associated to the distance driven, show the largest difference between models. Smaller differences between those $\widehat{\beta}$ can be observed between the HF-NBB model and the BMS-NB1 model. It means that the form of the premium penalties for claiming has an impact of the \textit{a priori} rating. 

\begin{table}
	\centering
	\begin{tabular}{lrrrrrr}
 		\toprule
 &		\multicolumn{ 2}{c}{NB1} &	\multicolumn{ 2}{c}{HF-NBB} & \multicolumn{ 2}{c}{BMS-NB1} \\	
Parameters &Est. & Std. err. &Est. & Std. err. &Est. & Std. err.  \\		
		\midrule
$\widehat{\beta_0}$ &$-2.103$ & ($0.036$)	&$1.527$&   ($0.106$)	&$-2.356$&	($0.031$) \\
$\widehat{\beta_1}$ & $0.037$	 & ($0.010$)	&$0.037$&	  ($0.015$)	&$0.031$&	    ($0.013$) \\
$\widehat{\beta_2}$ &$-0.026$ & ($0.014$)	&$-0.036$&  ($0.015$)	&$-0.031$&	($0.014$) \\
$\widehat{\beta_3}$ &$0.424$	 & ($0.018$)	&$0.417$&	  ($0.019$)	&$0.403$&	    ($0.018$) \\
$\widehat{\beta_4}$ & $0.345$	 & ($0.015$)	&$0.323$&	  ($0.016$)	&$0.309$&	    ($0.016$) \\
$\widehat{\beta_5}$ &$-0.579$ & ($0.041$)	&$-0.497$&  ($0.071$)	&$-0.481$&	($0.028$) \\
$\widehat{\beta_6}$ &$0.453$ & ($0.042$)	&$-0.397$&  ($0.070$)	&$-0.385$&	($0.032$) \\
$\widehat{\beta_7}$ &$-0.245$ & ($0.050$)	&$-0.222$&  ($0.074$)	&$-0.216$&	($0.039$) \\
$\widehat{\beta_8}$ & $0.029$ & ($0.013$)	& $0.047$&  ($0.016$)	& $0.035$&    ($0.016$) \\
     \bottomrule
\end{tabular}
\caption{Estimated $\beta$ (std.err.) parameters for specified models}
\label{tab:resbeta}
\end{table} 

\subsection{Predictive and Covariance Analysis}

\begin{table}
	\centering
	\begin{tabular}{lrrr}
		\toprule
		Models &	Parameter &	Estimation & Std. Err \\		
		\midrule
NB1       & $\tau$ &   $0.067$  &  ($0.004$) \\ \midrule
HF-NBB    & $a$      & $264.818$  & ($12.881$) \\
          & $b$      &   $5.500$  &  ($0.471$) \\
          & $\nu$    &   $0.900$  &  ($0.008$) \\ \midrule
BMS-NB1   & $\tau$ &   $0.062$  &  ($0.003$) \\
          & $\delta$ &   $0.120$  &  ($0.004$) \\
     \bottomrule
\end{tabular}
\caption{Estimated parameters for specified models}
\label{tab:resstruct}
\end{table} 

Table~\ref{tab:resstruct} shows the other estimated parameters for the same models.  For the NB1 distribution, the value of $\tau$ mainly measures the overdispersion of the count distribution and does not model the dependence between number of claims.  Indeed, the cross-section models suppose independence between all contracts and a null covariance between $N_{i,t}$ and $N_{i,t+j}$. This means that no merit rating is possible for this class of models. By comparison, for the more classic panel data models, such as the MVNB or the NBBeta, we established that the covariance between $N_{i,t}$ and $N_{i,t+j}$ is constant and does not depend on $j$. This results in the fact that the age of a claim is not being considered in the merit rating plan scheme.

The HF-NBB models, generalized quite directly by something similar to a Kalman filter approach, were designed to allow different weights depending on the age of the claim.  Analytic solution to compute the covariance is too complex, so we simply simulate the values. Figure~\ref{fig:covHF} shows two graphs for the covariance between the number of claims. The one on the left-hand side shows $\Cov{N_{i,t}, N_{i,t+j}}$ for $t=1, \ldots, 10$ for lag $j=1, \ldots, 15$, while the one on the right-hand side shows $\Cov{N_{i,t-j}, N_{i,t}}$ for $t = 15$ and for lag $j=1, \ldots, 14$.  
\begin{figure}
	\centering
	\includegraphics[width=8cm]{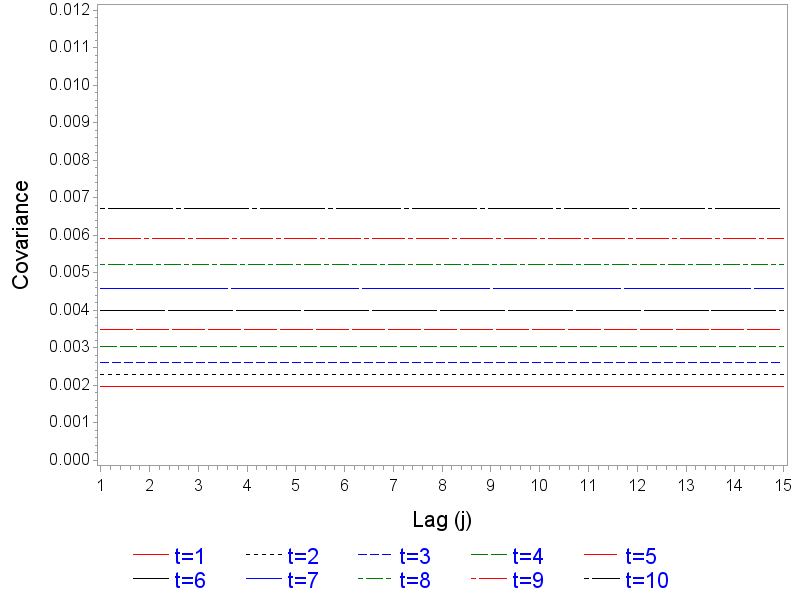}
	\includegraphics[width=8cm]{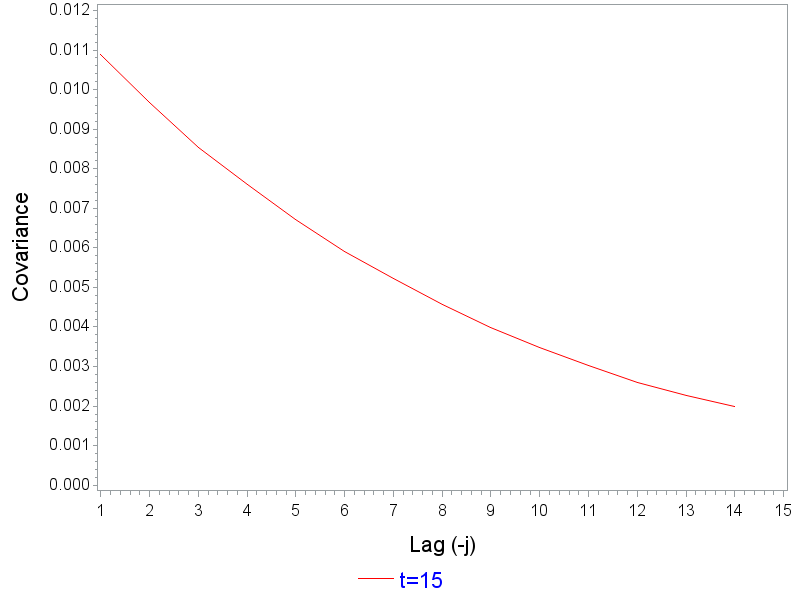}
	\caption{Covariances of the HF models with $\Cov{N_{i,t}, N_{i,t+j}}$ on the left-hand side and $\Cov{N_{i,15-j}, N_{i,15}}$ at right-hand side.}
	\label{fig:covHF}
\end{figure}
Analysis of the graphs indicates, for example, that the number of claims observed in the first contract of insured $i$ will have the same impact on all future contracts.  However, the covariance between claim counts will grow as $t$ becomes larger. Therefore, the following relation holds:
\begin{align*}
\Cov{N_{i,t_1}, N_{i,t_1+j}} \le  \Cov{N_{i,t_2}, N_{i,t_2+j}}, \text{ for } t_1 < t_2.
\end{align*}
The HF model was constructed to give unequal weight on the predictive premium depending on the age of the claim.  Now, we see that the model does not give less weight to older claims, but gives more and more weight to newer claims.  This pattern can be seen more clearly by the following proposition.

\begin{Proposition}\label{propHF}
Based on an insured who has never claimed, the premium increase resulting from a claim will be higher if the insured has a long driving experience. In addition, the greater the driving experience of an insured, the greater the impact of a claim on the following year's premium.
\end{Proposition}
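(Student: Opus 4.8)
The plan is to work directly with the closed-form predictive premiums that follow from the recursion~\eqref{eq:rec}, because every assertion in the statement is just a monotonicity property of those premiums in the length of a claim-free history; throughout I take $0<\nu<1$, the regime in which the HF model is genuinely dynamic and the one selected by the data (Table~\ref{tab:resstruct}). First I would fix an insured with a prescribed a priori premium trajectory $\lambda_{i,1},\lambda_{i,2},\ldots$ who has reported no claim during the first $t$ periods, so $n_{i,1}=\cdots=n_{i,t}=0$. The HF-MVNB premium for period $t+1$ is then
\begin{align*}
  \pi_{i,t+1}^{\text{no claim}} &= \lambda_{i,t+1}\,\frac{\nu^{t}\kappa}{\nu^{t}\kappa+\sum_{k=1}^{t}\nu^{k}\lambda_{i,t-k+1}},
\end{align*}
while if the insured instead files one claim, in the most recent period $t$ (this claim enters the weighted sum $\sum_{k=1}^{t}\nu^{k}n_{i,t-k+1}$ through the index $k=1$, hence with weight $\nu^{1}=\nu$), the period-$(t+1)$ premium becomes
\begin{align*}
  \pi_{i,t+1}^{\text{one claim}} &= \lambda_{i,t+1}\,\frac{\nu^{t}\kappa+\nu}{\nu^{t}\kappa+\sum_{k=1}^{t}\nu^{k}\lambda_{i,t-k+1}}.
\end{align*}
The structural observation to exploit is that the denominator depends only on the a priori premiums $\lambda_{i,k}$, which do not react to the realized claim count, so the claim alters only the numerator and only by the additive amount $\nu$; the identical computation, with $\kappa$ replaced by $b$ and a $-1$ appended to the denominator, covers HF-NBB.

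Next I would read off the two quantities the statement refers to. The multiplicative surcharge caused by the claim is
\begin{align*}
  \frac{\pi_{i,t+1}^{\text{one claim}}}{\pi_{i,t+1}^{\text{no claim}}} &= \frac{\nu^{t}\kappa+\nu}{\nu^{t}\kappa}=1+\frac{\nu^{1-t}}{\kappa}
\end{align*}
(and $1+\nu^{1-t}/b$ for HF-NBB). Since $t\mapsto\nu^{1-t}$ is strictly increasing when $0<\nu<1$, this surcharge is strictly increasing in the length $t$ of the claim-free driving history --- it even diverges as $t\to\infty$ --- which is precisely the ``more and more weight on newer claims'' behaviour flagged in the text. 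This conclusion needs no restriction on the $\lambda_{i,k}$, since $\lambda_{i,t+1}$ cancels in the ratio.

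For the absolute effect,
\begin{align*}
  \pi_{i,t+1}^{\text{one claim}}-\pi_{i,t+1}^{\text{no claim}} &= \lambda_{i,t+1}\,\frac{\nu}{\nu^{t}\kappa+\sum_{k=1}^{t}\nu^{k}\lambda_{i,t-k+1}}.
\end{align*}
Holding the a priori trajectory fixed when comparing two insureds (say $\lambda_{i,k}\equiv\lambda$), this increase is monotone in $t$ exactly when the denominator $D_t=\nu^{t}\kappa+\lambda\nu(1-\nu^{t})/(1-\nu)=\lambda\nu/(1-\nu)+\nu^{t}\bigl(\kappa-\lambda\nu/(1-\nu)\bigr)$ is monotone, hence decreasing in $t$ as soon as $\kappa(1-\nu)>\lambda\nu$ (resp. $a(1-\nu)>\lambda\nu$ for HF-NBB). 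This is the low-frequency regime in which condition~\eqref{eq:condition} of Proposition~\ref{prop:JP} is satisfied, and it is comfortably met by the fitted parameters, so the two displays together yield both assertions of the proposition.

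The algebra above is routine; care is needed in two places. First, one must track the weight indexing in $\sum_{k}\nu^{k}n_{i,t-k+1}$ so that a claim in the most recent period carries the largest admissible weight $\nu^{1}$ (a claim placed $j$ periods earlier would carry only $\nu^{j}$), since this is exactly what front-loads the surcharge. Second --- the only genuine obstacle --- the absolute-increase statement, unlike the relative one, is conditional: it rests on $\kappa(1-\nu)>\lambda\nu$, so I would either prove the proposition under the hypothesis of Proposition~\ref{prop:JP} or phrase its conclusion multiplicatively, which requires no extra assumption.
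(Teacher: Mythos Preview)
Your proof is correct and follows essentially the same route as the paper: form the ratio of the HF predictive premium with and without a single claim and observe that it equals $1+\nu^{1-t}/\kappa$ (respectively $1+\nu^{1-t}/b$), which is strictly increasing in $t$ when $0<\nu<1$. The paper works with HF-NBB rather than HF-MVNB, but the algebra is identical.

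One point of comparison is worth recording. You place the claim in the most recent period $t$, which is all the proposition requires. The paper instead puts the claim at a general period $w\le t$ and obtains the ratio $1+\nu^{1-w}/b$; the payoff is the extra observation---exploited in the discussion immediately following the proposition---that the surcharge depends only on $w$ and not on $t$, so the impact of a claim never decays over time. Your additional analysis of the absolute premium increase is not in the paper and, as you correctly note, requires a side condition; the paper stays with the multiplicative comparison, which is unconditional.
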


\begin{proof}
First, based on the HF-NBBeta model, let us compute the premium at time $t+1$ for an insured who did not report at all: 

\begin{align*}
      \pi_{i, t+1}
       &= \lambda_{i, t + 1}\left(\frac{\nu^{t} b }{\left(\nu\right)^{t} a + \sum_{k=1}^{t}\left(\nu\right)^{k}\lambda_{i, t-k+1} - 1}\right).
\end{align*}

Then, we compute the premium of an insured who had only one claim at time $w < t$:

	\begin{align*}
      \pi_{i, t+1}^{(w)} 
       &= \lambda_{i, t + 1}\left(\frac{\nu^{t} b  + \nu^{t-w+1}}{\left(\nu\right)^{t} a + \sum_{k=1}^{t}\left(\nu\right)^{k}\lambda_{i, t-k+1} - 1}\right).
        \end{align*}
        
The impact of the claim at time $w$, or the increase in premium, is then calculated as:

\begin{align*}
\frac{\pi_{i, t+1}^{(w)}}{\pi_{i, t+1}}
&= \left(\frac{\nu^{t} b + \nu^{t-w+1}} {\nu^{t} b}\right) = 1 +  \frac{\nu^{-w+1}}{b} = 1 +  \left(\frac{\nu}{b}\right) \nu^{-w} 
\end{align*}

Because $\nu < 1$ for the HF model, which means that 
the premium increase will amplify as $w$ grows.  Moreover, we note that the increase does not depend on $t$ or on the age of the claim 
($t-w$).  In other words, the HF models suppose that the impact of a claim at time $w$ will stay the same for all future contracts $t$.
\end{proof}

In classic panel data models, such as the MVNB or the NBBeta models, it is impossible for an insured who claims at least one time to ever have a premium equal to an insured who has never claimed, even after several years since the first and only claim. With the HF generalization and the introduction of the weight parameter $\nu$, we would expect that the impact of old claims gradually becomes insignificant in future premiums. We just saw that this is not the case.  For the HF-NBBeta used with our data, we obtained $\hat{b}=5.5$ and $\hat{\nu} = 0.9$, which means that a claim at time $w=1$ will cause an  $18.2~\%$ increase in the premium for all future contracts, while a claim at time $w=15$ will always increase the premium by $78.9~\%$ compared with an insured who never files an accident. Even if the fit of the HF models is interesting, this property of the model means that it cannot be seriously considered in practice: 
the longer the insureds' driving experience, the higher their penalty for filing an accident. The BMS-panel model does not have this property: for example, with the maximum number of levels $s=11$, an insured without any claims for $11$~consecutive years will have the same premium as a similar insured who did not report at all.  We think that this property is more realistic, and more desirable for insurers.\\

The BMS-NB1 model with $s=11, \ell^*=1$ and $\Psi=6$ supposes a linear relativity for claim penalty with $\widehat{\delta}= 0.12$.  Depending on the BMS level of each insured, it means that the premium will be equal to $1.12, 1.24, \ldots, 2.08, 2.20$ times the basic premium (level~$1$). For example, for an insured at level~$2$, a claim during the year results in a jump of $\Psi = 6$ levels, which represents an increase of approximately $64\%$ of the premium. Conversely, a year without claim generates a premium reduction of $11\%$.

To evaluate more precisely the impact of the current level and the dependence between claim counts on different contracts, based on the result of Proposition~\ref{prop:cov}, we compute the covariance of the BMS-panel model, as shown in Figure~\ref{fig:cov}. To compare the covariances, we also included the covariance implied by the NBBeta model, which stays constant over time. 
Unlike in the HF model, the covariance depends on $\ell_t$, the BMS level at time $t$. The dynamic property of the BMS-panel model can be observed: the dependence between annual contracts decreases as the lag grows.  We think that this is one of the most important properties of a merit rating plan.  The impact of $\ell_1$ (the level of the BMS at time 1) on the covariance is clear. For example, as an insured located on $\ell_1 = 1$ cannot attain a lower level at time 2, the covariance between $N_1$ and $N_{2}$ is limited. As $\ell_1$ grows, the covariance also increases, but because the impact of one claim on the next BMS level is $\Psi = 6$, we observe that the covariance begins to decrease for $\ell_1 > 5$.  Finally, insured on $\ell_1 = s = 11$ are also limited, this time by the fact that they cannot attain a higher BMS level.  As the lag between two number of claims grows, the covariance decreases and the impact of $\ell_1$, the level of the BMS at time 1, vanishes.

\begin{figure}
	\centering
	\includegraphics[width=8cm]{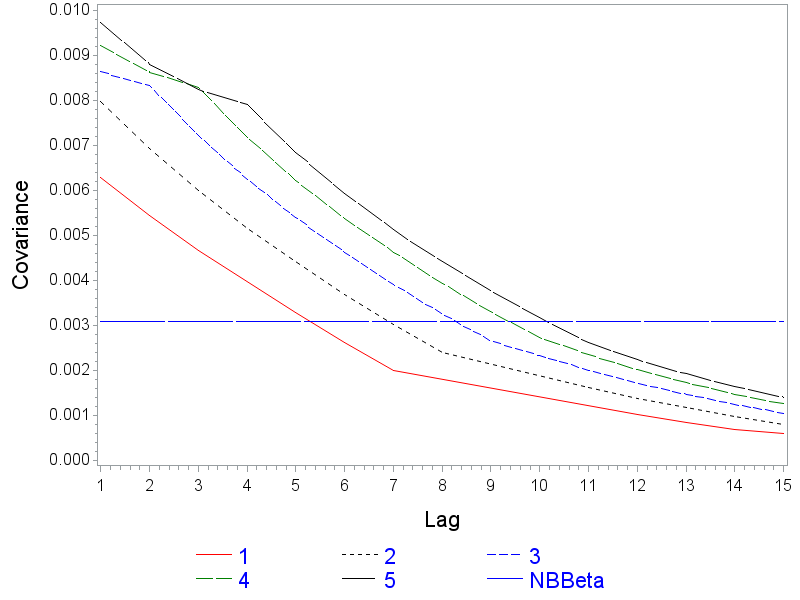}
	\includegraphics[width=8cm]{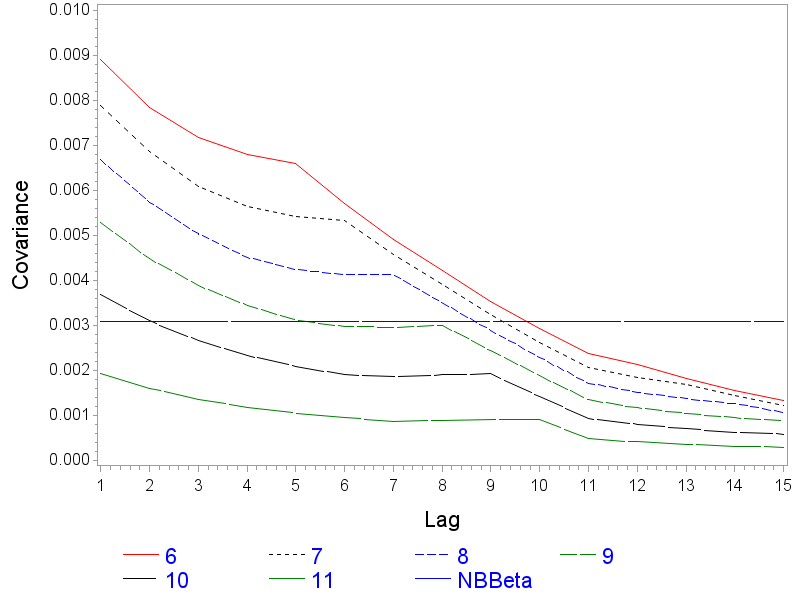}
	\caption{Covariance of the BMS-panel depending on the level $\ell$}
	\label{fig:cov}
\end{figure}

\clearpage
  \section{Conclusion}
 \label{sec:conclusion}

The BMS approach proposed in this paper generates a claim score that is easy to explain. The claim score is also easy to use because it summarizes the whole claim experience of an insured, by also taking into account the age of each claim. While we initially wanted to show that the BMS-panel model was practical and flexible, we were surprised to see that it also generated a much better statistical fit than some of the most popular counting distributions known in the actuarial literature. The decreasing covariance of the model,  which assigns a lower weight to older claims, seems to best explain the interesting predictive power of the model. Indeed, the vast majority of existing models do not show such flexibility. The HF model proposed in this paper, which seemed to be a straightforward generalization of classic panel data models, seem an interesting solution. However, by analyzing the predictive premium of those HF models, we have shown that they become completely unpractical when a claimant's insurance history grows. Indeed, the impact of a single claim for an insured with a long driving experience becomes so important that no insurer would be interested in implementing such an approach.  

Many studies and generalizations of the BMS-panel models are now possible:
\begin{itemize}
\item we can study a two-dimensional generalization of the BMS approach, where several types of claims (such 
  as at-fault or no-fault accidents) could be modeled;
\item achieving a better understanding of the multivariate dynamics in insurance;
\item many properties of the BMS, already developed and understood, can now be applied to BMS-panel model, e.g., asymptotic properties of the BMS, tools to compare merit rating plans, hunger for bonus, etc.
\end{itemize} 

 \section*{Acknowledgements}

Jean-Philippe Boucher and  Mathieu Pigeon would like to thank the financial support from the Canadian Institute of Actuaries for its financial support in the form of research grant \# CS000168.


\begin{thebibliography}{5}

 \bibitem{AA2016}
{\sc A. Abdallah} and {\sc J.-P. Boucher} and {\sc H. Cossette} and {\sc J. Trufin}  (2016). 
\newblock Sarmanov Family of Bivariate Distributions for Multivariate Loss Reserving Analysis.
\newblock {\em North American Actuarial Journal}, 20(2), 184--200.

\bibitem{Alb}
{\sc P. Albrecht} (1985). 
\newblock An evolutionary credibility model for claim numbers
\newblock {\em  ASTIN Bulletin}, 15(1), 1--17.

\bibitem{B2007}
{\sc C. Bolancé} and {\sc M. Denuit} and {\sc M. Guillén} and {\sc P. Lambert}  (2007). 
\newblock Greatest accuracy credibility with dynamic heterogeneity: the Harvey-Fernandes model.
\newblock {\em Belgian Actuarial Bulletin}, 7(1), 14--18.

\bibitem{Boucher2006}
{\sc Boucher, J.-P.} and {\sc M. Denuit} (2006). 
\newblock Fixed versus Random Effects in Poisson Regression Models for Claim Counts: Case Study with Motor Insurance.
\newblock {\em ASTIN Bulletin}, 36, 285--301.

\bibitem{Boucher2008}
{\sc Boucher, J.-P.} and {\sc M. Denuit} and {\sc M. Guillén} (2008). 
\newblock  Models of Insurance Claim Counts with Time Dependence Based on Generalisation of Poisson and Negative Binomial Distributions.
\newblock {\em Variance}, 2(1), 135--162.

\bibitem{Boucher2009}
{\sc Boucher, J.-P.} and {\sc M. Guillén} (2009). 
\newblock A Survey on Models for Panel Count Data with Applications to Insurance. 
\newblock {\em Revista de la Real Academia de Ciencias Exactas, Físicas y Naturales}, 103(2), 277--295.

\bibitem{Boucher2014}
{\sc Boucher, J.-P.} and {\sc Inoussa, R.} (2006). 
\newblock A Posteriori Ratemaking with Panel Data.
\newblock {\em ASTIN Bulletin}, 44(3), 587--612.

\bibitem{BG}
{\sc Bühlmann, H.} and {\sc Gisler, A.} (2005).
\newblock {\em A Course in Credibility Theory and its Applications\/}.
\newblock Springer Berlin Heidelberg New York

\bibitem{CHO}
{\sc Choirat, C.} and {\sc Raffaello, S.} (2012).
\newblock {\em Estimation in Discrete Parameter Models}.
\newblock {\em Statistical Science}, 27(2), 278--293.

\bibitem{Denuit}
{\sc Denuit, M.} and {\sc Maréchal, X.} and {\sc Pitrebois, S.} and {\sc Walhin, J.-F.} (2007). 
\newblock {\em Actuarial Modelling of Claim Counts: Risk Classification, Credibility and Bonus-Malus Scales}. 
\newblock Wiley: New York.

\bibitem{frees2014}
 {\sc Frees, E.W. and Derrig, R.A. and Meyers, G.} (2014).
\newblock {\em Predictive modeling applications in actuarial science},
\newblock Cambridge University Press.

\bibitem{Gilde}
{\sc Gilde, V.} and {\sc Sundt, B.} (1989). 
\newblock On Bonus systems with credibility scales. 
\newblock {\em Scandinavian Actuarial Journal}, 1989(1), 13--22.

\bibitem{GourierouxJasiak2004}
{\sc Gourieroux, C.} and {\sc Jasiak, J.} (2004).
\newblock Heterogeneous INAR(1) Model with Application to Car Insurance.
\newblock {\em Insurance: Mathematics and Economics}, 34(2), 177--192.

\bibitem{H1989}
{\sc A.C. Harvey} and {\sc C. Fernandes} (1989).
\newblock Time series models for count or qualitative observations.
\newblock {\em Journal of Business \& Economics Statistics}, 7, 407--422.

\bibitem{HAM}
  {\sc Hammersley, J. M.} (1950).
  \newblock On estimating restricted parameters (with discussion).
  \newblock {\em J. Roy. Statist. Soc. Ser. B}, 12, 192--240.

\bibitem{Jung}
{\sc R. Jung} and {\sc R. Liesenfeld} (2001).
\newblock stimating time series models for count data using efficient importance sampling.
\newblock {\em AStA Advances in Statistical Analysis}, 4(85), 387--407.

\bibitem{Lemaire}
{\sc Lemaire, J.} (1995). 
\newblock {\em Bonus-Malus Systems in Automobile Insurance}. 
\newblock Kluwer Academic Publisher, Boston.

\bibitem{McCullaghNelder1989}
{\sc McCullagh, P.} and {\sc Nelder, J.~A.} (1989).
\newblock {\em Generalized Linear Models\/}.
\newblock London : Chapman and Hall, 2nd ed.

\bibitem{Pinquet}
{\sc Pinquet, J.} and {\sc Guillén, M.} and {\sc Bolancé, C.} (2001).
\newblock Allowance for the age of claims in bonus-malus systems.
\newblock {\em  ASTIN Bulletin}, 31(2), 337-348.

\bibitem{Shi2014}
{\sc Shi, P.} and {\sc Valdez, E.} (2016).
\newblock  Longitudinal modeling of insurance claim counts using jitters.
\newblock {\em Scandinavian Actuarial Journal}, 2014(2), 159-179.

\bibitem{Shi2016}
{\sc Shi, P.} and {\sc Feng, X.} and {\sc Boucher, J.-P.} (2016).
\newblock Multilevel modeling of insurance claims using copulas.
\newblock {\em  The Annals of Applied Statistics}, 10(2), 834-863.

\bibitem{Wink}
{\sc Winkelmann, R.}  (2010).
\newblock {\em Econometric Analysis of Count Data\/}.
\newblock Springer-Verlag Berlin Heidelberg, 5th ed.


\end{thebibliography}
\end{document}